\tikzstyle{process} = [rectangle, 
\tikzstyle{process_on} = [rectangle, 
\tikzstyle{arrow} = [thick,->,>=stealth]
\theoremstyle{definition}
\newtheorem{cj}{Conjecture}
\newtheorem{theorem}{Theorem}
\newtheorem{col}{Corollary}
\newcommand{\arank}{\mathrm{arank}}
\newcommand{\brank}{\mathrm{brank}}
\newcommand{\crank}{\mathrm{crank}}
\newcommand{\wrank}{\mathrm{wrank}}
\begin{document}

\title{Implementing arbitrary multi-mode continuous-variable quantum gates\\with fixed non-Gaussian states and adaptive linear optics}

\author{Fumiya Hanamura}
\affiliation{Department of Applied Physics, School of Engineering, The University of Tokyo, 7-3-1 Hongo, Bunkyo-ku, Tokyo 113-8656, Japan}
\author{Warit Asavanant}
\affiliation{Department of Applied Physics, School of Engineering, The University of Tokyo, 7-3-1 Hongo, Bunkyo-ku, Tokyo 113-8656, Japan}
\affiliation{Optical Quantum Computing Research Team, RIKEN Center for Quantum Computing, 2-1 Hirosawa, Wako, Saitama 351-0198, Japan}
\author{Hironari Nagayoshi}
\affiliation{Department of Applied Physics, School of Engineering, The University of Tokyo, 7-3-1 Hongo, Bunkyo-ku, Tokyo 113-8656, Japan}
\author{Atsushi Sakaguchi}
\affiliation{Optical Quantum Computing Research Team, RIKEN Center for Quantum Computing, 2-1 Hirosawa, Wako, Saitama 351-0198, Japan}
\author{Ryuhoh Ide}
\affiliation{Department of Applied Physics, School of Engineering, The University of Tokyo, 7-3-1 Hongo, Bunkyo-ku, Tokyo 113-8656, Japan}
\author{Kosuke Fukui}
\affiliation{Department of Applied Physics, School of Engineering, The University of Tokyo, 7-3-1 Hongo, Bunkyo-ku, Tokyo 113-8656, Japan}
\author{Peter van Loock}
\affiliation{Institute of Physics, Johannes-Gutenberg University of Mainz, Staudingerweg 7, 55128 Mainz, Germany}
\author{Akira Furusawa}
\affiliation{Department of Applied Physics, School of Engineering, The University of Tokyo, 7-3-1 Hongo, Bunkyo-ku, Tokyo 113-8656, Japan}
\affiliation{Optical Quantum Computing Research Team, RIKEN Center for Quantum Computing, 2-1 Hirosawa, Wako, Saitama 351-0198, Japan}

\date{\today}

\begin{abstract}
Non-Gaussian quantum gates are essential components for optical quantum information processing. However, the efficient implementation of practically important multi-mode higher-order non-Gaussian gates has not been comprehensively studied. We propose a measurement-based method to directly implement general, multi-mode, and higher-order non-Gaussian gates using only fixed non-Gaussian ancillary states and adaptive linear optics. Compared to existing methods, our method allows for a more resource-efficient and experimentally feasible implementation of multi-mode gates that are important for various applications in optical quantum technology, such as the two-mode cubic quantum non-demolition gate or the three-mode continuous-variable Toffoli gate, and their higher-order extensions. Our results will expedite the progress toward fault-tolerant universal quantum computing with light.
\end{abstract}


\maketitle

\section{Introduction}
Continuous-variable (CV) optical systems are a promising platform for large-scale quantum information processing. Alongside the large-scale Gaussian operations enabled by cluster states \cite{cluster,mikkel_cluster}, non-Gaussian operations are crucial elements \cite{universal,qec_nogo} for many practical tasks ranging from fault-tolerant universal quantum computing \cite{gkp} to quantum simulation \cite{quantum_simulation}. The single-mode cubic phase gate (CPG) \cite{gkp}, whose Hamiltonian is $\hat{x}^3$, has been intensively explored \cite{gkp,miyata_cpg} as the simplest example of such non-Gaussian operations, and has been demonstrated in a proof-of-principle experiment \cite{sakaguchi_cpg}.
However, for practical tasks such as generation and manipulation of code words for quantum error correction \cite{gkp}, or quantum simulation of complex quantum systems \cite{quantum_simulation}, multi-mode and/or higher-order non-Gaussian gates are required. 

Direct deterministic implementations of such non-Gaussian gates are challenging in optical systems because of their small intrinsic nonlinearity. Thus, ancilla-assisted implementations with offline probabilistic ancilla states are a common approach \cite{miyata_cpg,sakaguchi_cpg}. However, in most proposals, only the implementation of elementary single-mode gates such as CPGs is discussed, and multi-mode, higher-order gates are implemented via decomposition of the gates into multiple CPGs and Gaussian operations \cite{cpg_gkp,decomp_gate,exact_decomposition}. This requires a large number of CPGs and the noise from each gate accumulates, making the composed gate noisy. As a complementary approach, a method has been proposed to directly perform the higher-order and/or multi-mode gates, using higher-order and/or multi-mode non-Gaussian ancillary states and nonlinear feedforward \cite{multimode_coupling}. This scheme has the advantage that it requires fewer steps to implement the gate, at the expense of more complex non-Gaussian ancillary states. However, this proposal is somewhat incomplete, as it requires an adaptive preparation of different non-Gaussian states depending on previous measurement outcomes, whose experimental implementation is not trivial.

In this work, we propose a general methodology to implement high-order multi-mode non-Gaussian gates, requiring only the offline preparation of fixed ancillary states that depend solely on the gate to be implemented, and adaptive linear optics, which are experimentally concrete and feasible resources. Our protocol is measurement-based \cite{mbqc}, thus compatible with quantum information processing using cluster states \cite{cluster}. Our protocol allows different choices of non-Gaussian ancillary states for implementing the same gate. Exploiting this degree of freedom, we propose some heuristic approaches based on Chow decomposition \cite{chow_rank} of polynomials to reduce the number of ancillary modes. We apply our method to several important examples, including the cubic quantum non-demolition (QND) gate \cite{cpg_gkp}, which is a lowest-order non-Gaussian entangling gate, and the CV Toffoli gate \cite{cpg_gkp}, which provides a minimal universal gate set together with the Hadamard gate, and their higher-order extensions. We show that both in 3rd-order cases and higher-order cases, one can reduce the number of required Gaussian and non-Gaussian ancillary modes compared to the conventional decomposition into single-mode non-Gaussian gates \cite{cpg_gkp,exact_decomposition,seth_decomp}. Our results open up new possibilities for CV quantum circuit optimization, which not only expedites the progress toward fault-tolerant universal quantum computing and efficient quantum simulation, but also leads to a better understanding of complex multi-mode quantum dynamics.

The structure of the paper is as follows. In Sec.~\ref{sec:notation}, we introduce some preliminary notations used throughout this paper. In Sec.~\ref{sec:decomp_quad}, we discuss that an arbitrary gate can be decomposed into ``quadrature gates'', whose Hamiltonian only includes one of the orthogonal quadrature operators. In Sec.~\ref{sec:mbqc}, we introduce the measurement-based implementation of quadrature gates, showing the equivalence of measurements and gates. In Sec.~\ref{sec:3rd_order}, we propose implementations of multi-mode 3rd-order gates, introducing the concept of generalized linear coupling, which provides the freedom to choose different non-Gaussian ancillary states. In Sec.~\ref{sec:3rd_order_examples}, we describe some examples of 3rd-order gates and their implementations. In Sec.~\ref{sec:higher_order}, we describe the general methodology of how to implement higher-order gates. In Sec.~\ref{sec:reduction}, we propose some heuristic approaches to reduce the number of ancillary modes using mathematical tools such as Chow decomposition. In Sec~\ref{sec:general_examples}, we present some examples of higher-order gates. We demonstrate the resource-efficiency of our scheme compared to conventional schemes by using the strategies to reduce the ancillary modes introduced in Sec.~\ref{sec:reduction}.

\section{Definitions and Notations}\label{sec:notation}
The variables $x$ and $p$ represent quadrature operators of an optical mode, which satisfy a commutation relation $[x,p]=i$. Hats ($\hat{\cdot}$) of operators are omitted whenever it is clear from the context. We define the quadrature operator of arbitrary phase $\theta$ as
\begin{align}
    p_{\theta}=p\cos\theta+x\sin\theta.
\end{align}
Bold symbols such as $\bm{s}$ and $\bm{x}$ represent vectors of either c-numbers or operators.

For diagramatic notations, we use the notation illustrated in Fig.~\ref{fig:multimode_notation} for multi-mode states in circuit diagrams. We define a mode-wise beamsplitter for two parts of multi-mode states, with arbitrary numbers of modes $n,n'$ for each part. It is characterized by a vector of amplitude transmittances $\bm{t}=(t_1,\dots,t_{\min(n,n')})$. We define it as in Fig.~\ref{fig:n_np}, thus when $n\neq n'$, some of the modes just pass through without interaction.

We define a multi-mode beamsplitter described by an orthogonal matrix $O$ as an operation $U$ such that
\begin{align}
    U^\dagger \bm{x}U&=O\bm{x},\\
    U^\dagger \bm{p}U&=O\bm{p}.
\end{align}
We depict this as a rectangle with $O$ as in Fig.~\ref{fig:multimode_bs}. 
\begin{figure}[ht]
    \centering
    \subfloat[]{\includegraphics[width=0.7\linewidth]{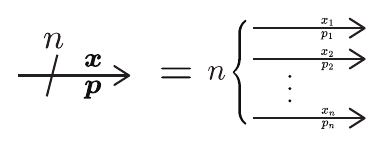}\label{fig:multimode_notation}}\\
    \subfloat[]{\includegraphics[width=0.7\linewidth]{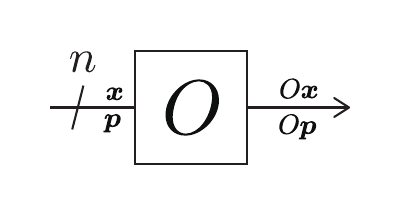}\label{fig:multimode_bs}}\\
    \subfloat[]{\includegraphics[width=0.8\linewidth]{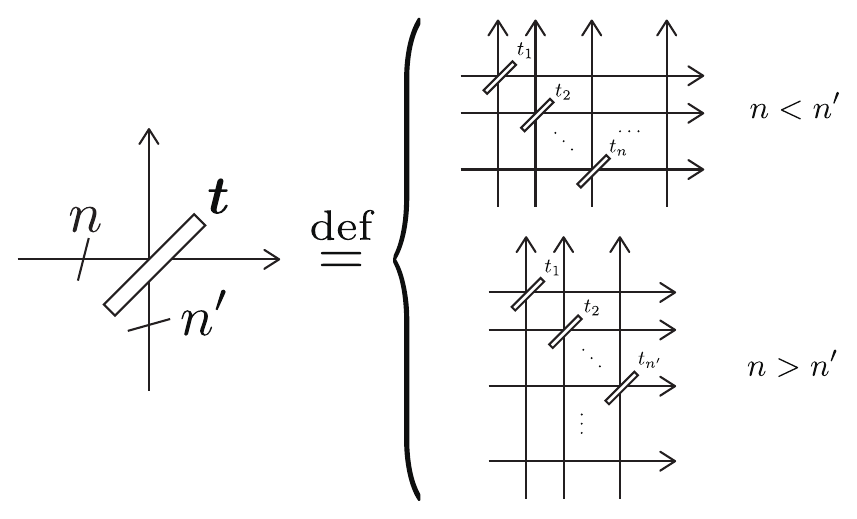}\label{fig:n_np}}\\
    \caption[]{\subref{fig:multimode_notation} Notation for $n$-mode states in circuit diagrams. \subref{fig:multimode_bs} Multi-mode beamsplitter corresponding to an orthogonal matrix $O$. \subref{fig:n_np} Mode-wise beamsplitter with transmittance $\bm{t}$.}
    \label{fig:notation}
\end{figure}

For multivariable polynomials, we use the tensor notation
\begin{align}
f(\bm{x})&=\sum_{k=0}^{N}f^{(k)}\bm{x}^{\otimes k}\\
&=\sum_{k=0}^{N}\sum_{i_1,\dots,i_k} f^{(k)}_{i_1,\dots,i_k}x_{i_1}\dots x_{i_k}
\end{align}
where $f^{(k)}$ is a symmetric tensor of rank $k$, corresponding to the $k$-th order coefficient of $f(\bm{x})$. For a homogeneous polynomial $f$, we sometimes just denote $f^{(N)}$ as $f$, by a slight abuse of notation. 
\section{Gate decomposition into quadrature gates}\label{sec:decomp_quad}
We consider a general multi-mode unitary operation
\begin{align}
    \exp[iH(x_1,\dots,x_n)]
\end{align}
with an arbitrary Hamiltonian $H$ in the form of a finite-order polynomial:
\begin{align}
    H(x_1,\dots,x_n)=\sum_i \qty(c_i\prod_k x_k^{m_{ik}}p_k^{n_{ik}}+c_i^*\prod_k p_k^{n_{ik}}x_k^{m_{ik}})
\end{align}
Using Trotter-Suzuki approximation \cite{trotter_suzuki,decomp_gate}, this can be decomposed into linear operations and Hamiltonians of the form
\begin{align}
    \prod_k x_k^{m_{k}}.
\end{align}
This only includes $x$-quadratures. We shall refer to this as a quadrature gate. Thus, in this form, it is directly suitable for a measurement-based implementation. In the rest of this paper, we will focus on how to realize these multi-mode quadrature gates. 

Note that such decomposition is not unique for a given Hamiltonian. In App.~\ref{sec:decomp_quad_appendix}, we explicitly provide an example of such decomposition for an arbitrary Hamiltonian, as a generalization of the method in Ref.~\cite{decomp_gate}.

\section{Measurement-based implementation of quadrature gates}\label{sec:mbqc}
\begin{figure}[ht]
    \centering
    \includegraphics[width=0.9\linewidth]{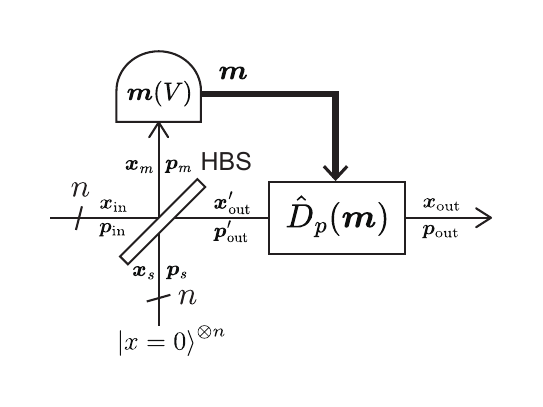}
    \caption{Measurement-based implementation of the Hamiltonian $V(x_1, \dots, x_n)$, with a constant squeezing factor. This setup includes ancillary squeezed states, a mode-wise half beamsplitter (HBS, as illustrated in Fig.~\ref{fig:multimode_bs}), measurements of nonlinear operators $\bm{m}(V)$, and feedforwarded displacement $\hat{D}_p(\bm{m})$ to the $p$ quadrature based on the measurement outcomes $\bm{m}$.}
    \label{fig:mbqc}
\end{figure}

We consider a quadrature gate $U=\exp[iV(x_1,\dots,x_n)]$ with an arbitrary $x$-Hamiltonian $V$ in the form of a finite-order polynomial:
\begin{align}
V(\bm{x})&=\sum_{k=1}^{N}V^{(k)}\bm{x}^{\otimes k}.
\end{align}

We consider a measurement-based implementation of the gate $U$. We define $n$ commuting operators
\begin{align}
\begin{split}
m_i(\bm{x},\bm{p};V)&=U^\dagger p_i U\\
&=p_i-\frac{\partial}{\partial x_i}V(x_1,\dots,x_n).
\end{split}
\end{align}
We write this as
\begin{align}
    \bm{m}(\bm{x},\bm{p};V)&=\bm{p}-\frac{\partial}{\partial \bm{x}}V(\bm{x})
\end{align}
in short. We also just write this as $\bm{m}(V)$, when $\bm{x},\bm{p}$ are clear from the context.

Then we consider the circuit in Fig.~\ref{fig:mbqc}, which consists of $n$ modes of ancillary squeezed states ($x$-eigenstates), a half beamsplitter, and simultaneous measurements of $m_i$. The quadrature operators after the beamsplitter are written as
\begin{align}
    \bm{x}_{\mathrm{out}}'&=\frac{1}{\sqrt{2}}(\bm{x}_{\mathrm{in}}+\bm{x}_{s}),\\
    \bm{p}_{\mathrm{out}}'&=\frac{1}{\sqrt{2}}(\bm{p}_{\mathrm{in}}+\bm{p}_{s}),\\
    \bm{x}_{m}&=\frac{1}{\sqrt{2}}(\bm{x}_{\mathrm{in}}-\bm{x}_{s}),\\
    \bm{p}_{m}&=\frac{1}{\sqrt{2}}(\bm{p}_{\mathrm{in}}-\bm{p}_{s}).
\end{align}
By measuring the operator $\bm{m}(\bm{x}_m,\bm{p}_m;V)$ and performing a displacement to $\bm{p}'_{\mathrm{out}}$ by a feedforward, the output quadratures are expressed as
\begin{align}
    \bm{x}_{\mathrm{out}}&=\frac{1}{\sqrt{2}}(\bm{x}_{\mathrm{in}}+\bm{x}_{s}),\label{eq:mbqc_x_with_noise}\\
    \bm{p}_{\mathrm{out}}&=\bm{p}_{\mathrm{out}}'+\bm{m}(\bm{x}_m,\bm{p}_m;V)\\
    &=\sqrt{2}\bm{p}_{\mathrm{in}}-\sqrt{2}\frac{\partial}{\partial \bm{x}_{\mathrm{in}}}V\qty(\frac{1}{\sqrt{2}}(\bm{x}_{\mathrm{in}}-\bm{x}_s))\label{eq:mbqc_p_with_noise}.
\end{align}
Because we take the $x$ eigenstate as the ancillary state, $\bm{x}_s$ can be substituted by $0$, obtaining
\begin{align}
    \bm{x}_{\mathrm{out}}&=\frac{1}{\sqrt{2}}\bm{x}_{\mathrm{in}},\label{eq:mbqc_x_transform}\\
    \bm{p}_{\mathrm{out}}&=\sqrt{2}\bm{p}_{\mathrm{in}}-\sqrt{2}\frac{\partial}{\partial \bm{x}_{\mathrm{in}}}V\qty(\frac{1}{\sqrt{2}}\bm{x}_{\mathrm{in}}).\label{eq:mbqc_p_transform}
\end{align}
This can be written as
\begin{align}
    \bm{x}_{\mathrm{out}}&=U^\dagger S^\dagger \bm{x}_{\mathrm{in}}SU,\\
    \bm{p}_{\mathrm{out}}&=U^\dagger S^\dagger \bm{p}_{\mathrm{in}}SU,
\end{align}
with a squeezing operator $S$ satisfying
\begin{align}
    S^\dagger \bm{p}S&=\sqrt{2}\bm{p},\\
    S^\dagger \bm{x}S&=\frac{1}{\sqrt{2}}\bm{x}.
\end{align}
This means the circuit implements the operation $U$ up to constant squeezing $S$.

Thus, the problem of implementing the gate is now reduced to finding an implementation for the simultaneous measurement of the non-Gaussian operators $m_i$. From here on, we focus exclusively on this measurement-based model. Therefore, when we refer to the implementation of a gate $V$, we mean the implementation of a measurement of $\bm{m}(V)$. In the following sections, we demonstrate that this measurement can be performed using multi-mode non-Gaussian ancillary states and linear optics, and we discuss methods to minimize the required resources for this implementation.

Note that, in a realistic situation where we use finitely squeezed states instead of the $x$-eigenstates as the ancillary states, the output state has additional classical Gaussian displacement noises expressed by $\bm{x}_s$, as can be seen from Eqs.~(\ref{eq:mbqc_x_with_noise}) and (\ref{eq:mbqc_p_with_noise}).

\section{Implementation of 3rd-order Hamiltonians}\label{sec:3rd_order}
In order to see how a non-Gaussian measurement can be implemented using non-Gaussian ancillary states and nonlinear feedforward, we first consider the simplest case of 3rd-order Hamiltonian. Suppose we want to implement a Hamiltonian
\begin{align}
    V(x_1,\dots,x_n)&=V\bm{x}^{\otimes 3}\\
    &=\sum_{ijk} V_{ijk} x_i x_j x_k,
\end{align}
where $V_{ijk}$ is an arbitrary symmetric tensor of rank 3. In order to implement this operation using the measurement-based method explained in Sec. \ref{sec:mbqc}, one needs to measure the operators
\begin{align}
    \bm{m}&=\bm{m}(\bm{x}_m,\bm{p}_m;V)\\
    &=\bm{p}_m-3V\bm{x}_m^{\otimes 2}.\label{eq:3rd_order_measurement}
\end{align}

Below, we introduce two methods for implementing this measurement. The first method involves mode-wise coupling using mode-specific beamsplitters and adaptive homodyne measurements, which require minimal resources for implementation. The second method generalizes this scheme using the concept of generalized linear coupling that we introduce. In this approach, a virtual Gaussian operation can be adaptively performed on the ancillary state using only adaptive linear optics. This second method enhances the flexibility of the implementation and is crucial for the implementation of higher-order gates discussed in Sec.~\ref{sec:higher_order}.
\begin{figure}[ht]
    \centering
    \subfloat[]{\includegraphics[width=0.9\linewidth]{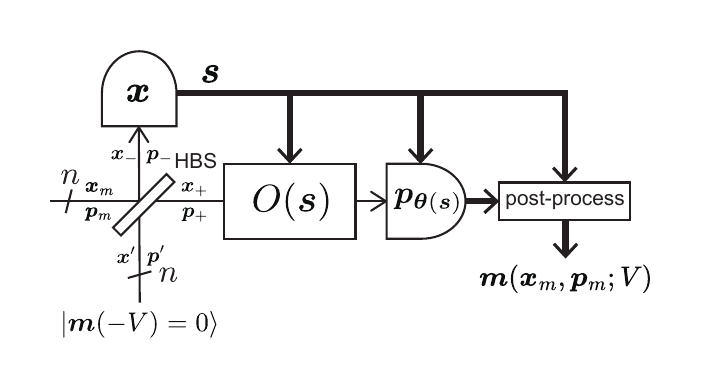}\label{fig:3rd_order_hbs}}\\
    \subfloat[]{\includegraphics[width=0.9\linewidth]{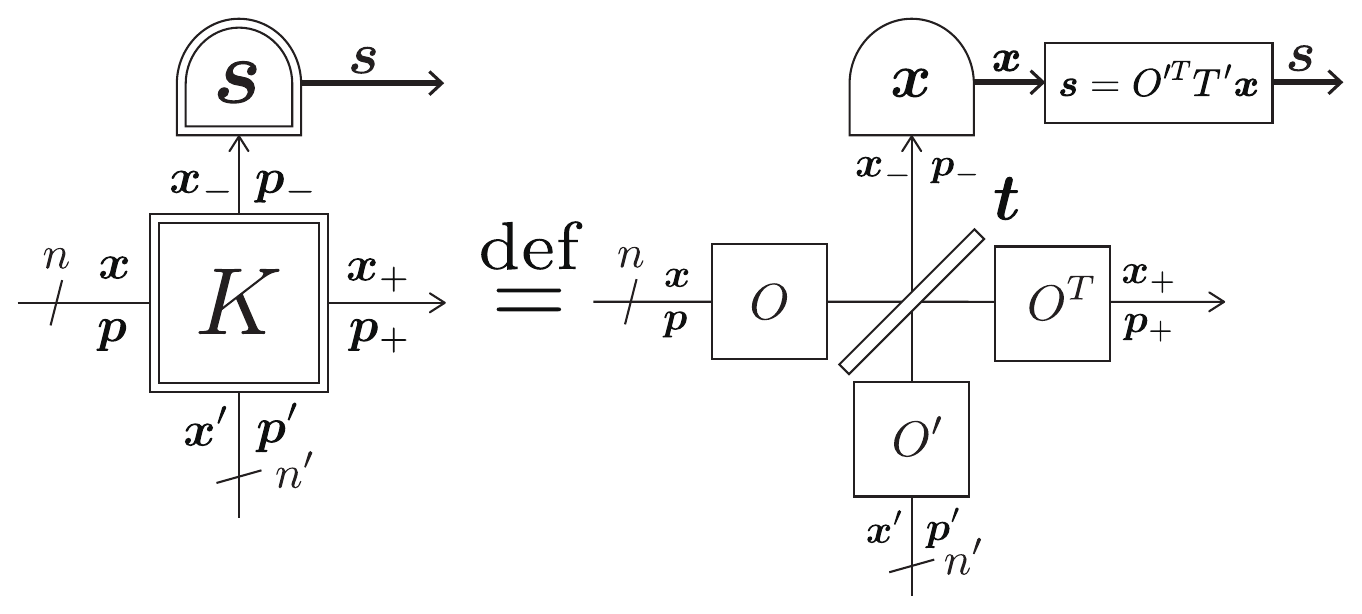}\label{fig:building_block}}\\
    \subfloat[]{\includegraphics[width=0.9\linewidth]{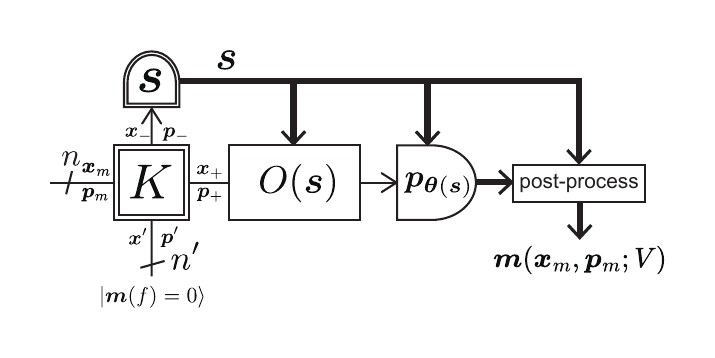}\label{fig:3rd_order_general}}
    \caption[]{\subref{fig:3rd_order_hbs} Implementation of a 3rd-order Hamiltonian with mode-wise coupling. The multi-mode beamsplitter $O(\bm{s})$ and the phase $\bm{\theta}$ are determined by the homodyne measurement outcomes $\bm{s}$. Note that the input field $\bm{x}_m,\bm{p}_m$ corresponds to the measured field in Fig.~\ref{fig:mbqc}, and the scheme implements the measurement of $\bm{m}(V)$. \subref{fig:building_block} Definition of generalized linear coupling characterized by a matrix $K$ and measurement $\bm{s}$. We draw them as a double square and a double-line detector, respectively. \subref{fig:3rd_order_general} Implementation of a 3rd-order Hamiltonian with generalized linear coupling $K$. The function $f$ corresponding to the ancillary states should satisfy Eq.~(\ref{eq:ancilla_cond}).}
\end{figure}
\subsection{Mode-wise coupling}\label{sec:mode_wise}
We first consider the circuit in Fig.~\ref{fig:3rd_order_hbs}. The input state is combined with an $n$-mode non-Gaussian ancillary state using mode-wise half-beamsplitters. The quadrature operators of the input and the ancillary modes are denoted as $\bm{x}_m,\bm{p}_m$ and $\bm{x}',\bm{p}'$, respectively. The quadrature operators after the beamsplitters are expressed as
\begin{align}
    \bm{x}_{-}&=\frac{1}{\sqrt{2}}(\bm{x}_m-\bm{x}'),
    &\bm{p}_{-}&=\frac{1}{\sqrt{2}}(\bm{p}_m-\bm{p}'),\\
    \bm{x}_{+}&=\frac{1}{\sqrt{2}}(\bm{x}_m+\bm{x}'),
    &\bm{p}_{+}&=\frac{1}{\sqrt{2}}(\bm{p}_m+\bm{p}').
\end{align}
When we define
\begin{align}
    \bm{m}'&=\bm{m}(\bm{x}',\bm{p}';-V)\\
    &=\bm{p}'+3V\bm{x}'^{\otimes 2},
\end{align}
and
\begin{align}
    \bm{\delta}&=\bm{m}+\bm{m}',
\end{align}
we obtain
\begin{align}
    \bm{\delta}&=\bm{p}_m+\bm{p}'-3V(\bm{x}_m^{\otimes 2}-\bm{x}'^{\otimes 2})\\
    &=\sqrt{2}\bm{p}_{+}-6V (\bm{x}_{+}\otimes\bm{x}_{-}).
\end{align}

Then we measure $\bm{x}_{-}$ and suppose we get the outcomes $\bm{x}_{-}=\bm{s}$. This makes the measurement of $\bm{\delta}$ equivalent to the measurement of the linear quadrature operators
\begin{align}
    \bm{\delta}&=\sqrt{2}\bm{p}_{+}-6V (\bm{s}\otimes\bm{x}_{+})\\
    &=\sqrt{2}(\bm{p}_{+}+A\bm{x}_{+})\label{eq:linear_quad}
\end{align}
where we define a rank-2 tensor (matrix)
\begin{align}
    A=-3\sqrt{2}V\bm{s}.
\end{align}

Thus $\bm{\delta}$ is a set of commuting linear combinations of the quadrature operators. These operators can be simultaneously measured using a multi-mode beamsplitter followed by homodyne measurements on each mode, as shown in App.~\ref{sec:linear_measurement}. Hence the measurement of $\bm{\delta}$ can be performed using the circuit in Fig.~\ref{fig:3rd_order_hbs}. The orthogonal matrix $O(\bm{s})$ and the phases of the homodyne measurement $\theta_i(\bm{s})$ are determined via the diagonalization of $A$ as follows,
\begin{align}
A&=O^T\mqty(\tan{\theta_1}&&\\&\ddots&\\&&\tan{\theta_n})O.\label{eq:evd}
\end{align}
Since these parameters are nonlinear functions of $\bm{s}$, nonlinear feedforward is required for implementing the measurement.

When we choose the ancillary state to be the eigenstate of $\bm{m}'$ satisfying
\begin{align}
    \bm{m}'=0,
\end{align}
the measurement of $\bm{\delta}$ is equivalent to the measurement of $\bm{m}$. In the presence of imperfections of the ancillary state, $\bm{m}'$ introduces extra noise to the measurement.
\subsection{Generalized linear coupling}\label{sec:generalized_coupling}
In Ref.~\cite{general_npg}, it is mentioned that one can effectively control the ancilla squeezing by changing the transmittance of the beamsplitter. This is useful because it can save squeezing resources using only linear optics, and it allows for adaptive squeezing operations on the ancillary state, which are required for the implementation of higher-order gates, as we will discuss in Sec.~\ref{sec:higher_order}. We generalize this idea for multi-mode cases, resulting in a structure that enables a broader range of multi-mode Gaussian operations on the ancillary state, not just mode-wise squeezing.

Suppose we have $n$ input modes and $n'$ ancillary modes. For any $n\times n'$ matrix $K$, let
\begin{align}
    K&=O^{\prime T}\Sigma O
\end{align}
be the singular value decomposition of $K$. $O$ and $O'$ are $n\times n$ and $n'\times n'$ orthogonal matrices, respectively, and $\Sigma$ is a $n'\times n$ diagonal matrix. Figure \ref{fig:building_block} defines a generalized linear coupling characterized by $K$ as a combination of three multi-mode beamsplitters $O,O',O^T$ and a mode-wise beamsplitter $\bm{t}$, where $\bm{t}$ is determined so that
\begin{align}
    \frac{r_i}{t_i}&=\lambda_i,&i=1,\dots,\min(n,n')
\end{align}
for the diagonal elements $\lambda_i$ of $\Sigma$ (see Sec.~\ref{sec:notation} and Fig.~\ref{fig:notation} for the diagrammatic notations used in Fig.~\ref{fig:building_block}). We express it as a double square as in Fig.~\ref{fig:building_block}. The quadrature operators of input and ancillary modes are denoted as $\bm{x},\bm{p}$ and $\bm{x}',\bm{p}'$. The quadrature operators of the output modes are denoted as $\bm{x}_{-},\bm{p}_{-}$ and $\bm{x}_{+},\bm{p}_{+}$. We also define a measurement $\bm{s}$ expressed by a double-line detector in Fig.~\ref{fig:building_block} as a measurement of $\bm{s}=O'^TT'\bm{x}_{-}$, where
\begin{align}
    T'&=\begin{cases}
    \mathrm{diag}(t_1,\dots,t_{n'}) & \mathrm {if\ } n'\leq n\\
    \mathrm{diag}(t_1,\dots,t_{n},1,\dots,1) & \mathrm {otherwise}
\end{cases}
\end{align}
This can be done either by first measuring $\bm{x}_{-}$ and postprocessing the measurement outcomes, or by measuring $\bm{x}_{-}$ after applying the multi-mode beamsplitter $O'^TT'$. Then, we have the following theorem.
\begin{theorem}\label{thm:diamond_relation}
For arbitrary $f,g$, there is a relation
\begin{align}
\begin{split}
&P(K)\bm{m}(\bm{x},\bm{p};f)+KP(K)\bm{m}(\bm{x}',\bm{p}';g)\\&=\bm{m}(\bm{x}_{+},\bm{p}_{+};f\diamond_{K,\bm{s}}g),
\end{split}
\end{align}
where we define a matrix $P(K)$ as
\begin{align}
    P(K)=(I+K^TK)^{-1/2}
\end{align}
and a binary operator $\diamond_{K,\bm{s}}$ between two functions $f,g$ as
\begin{align}
    (f \diamond_{K,\bm{s}} g)(\bm{x})=f(P(K)\bm{x} + K^T \bm{s})+g(K P(K)\bm{x} - \bm{s}).
\end{align}
\end{theorem}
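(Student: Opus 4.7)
The plan is to verify the identity directly in the Heisenberg picture. Both sides are expressions in linear quadrature operators and gradients of $f,g$, so once the Gaussian (symplectic) action of the coupling is known and inverted, the theorem reduces to a chain-rule calculation. The key insight is that the definition of $\diamond_{K,\bm{s}}$ is engineered so that the affine shifts $P(K)\bm{x}_+ + K^T\bm{s}$ and $KP(K)\bm{x}_+ - \bm{s}$ inside $f$ and $g$ correspond exactly to the inverted Heisenberg relations produced by the coupling-plus-measurement.

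First, I would unpack the three constituent pieces of Fig.~\ref{fig:building_block}. With the SVD $K=O'^T\Sigma O$ and the defining relations $r_i/t_i=\lambda_i$, $t_i^2+r_i^2=1$, the mode-wise block contributes a diagonal transmittance matrix $T=(I+\Sigma^T\Sigma)^{-1/2}$ and reflectance $R=\Sigma T$. The essential algebraic collapse, arising from sandwiching the mode-wise block between $O$ and $O^T$, is $O^TTO=(I+K^TK)^{-1/2}=P(K)$; this is how $P(K)$ emerges in the statement. A short bookkeeping computation through the three multi-mode beamsplitters $O,O',O^T$ and the mode-wise block yields
\[ \bm{x}_+ = P(K)\bm{x} + P(K)K^T\bm{x}', \qquad \bm{p}_+ = P(K)\bm{p} + P(K)K^T\bm{p}', \]
and the analogous calculation for $\bm{x}_-$ followed by the $O'^TT'$ rotation gives $\bm{s} = (I+KK^T)^{-1}(K\bm{x}-\bm{x}')$.

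Second, I would invert these linear relations, viewing $\bm{s}$ as the c-number registered by the detector. Using $(I+K^TK)P(K)=P(K)^{-1}$ together with the push-through identity $K(I+K^TK)^{-1/2}=(I+KK^T)^{-1/2}K$ (and its corollary $K^T(I+KK^T)^{-1}K=I-P(K)^2$), a direct calculation yields
\[ \bm{x} = P(K)\bm{x}_+ + K^T\bm{s}, \qquad \bm{x}' = KP(K)\bm{x}_+ - \bm{s}, \]
which are precisely the arguments of $f$ and $g$ in $(f\diamond_{K,\bm{s}}g)(\bm{x}_+)$. This matching step is the heart of the proof: the binary operator $\diamond_{K,\bm{s}}$ is reverse-engineered from the Heisenberg inversion so that the chain rule closes. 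Expanding $\bm{m}(\bm{x}_+,\bm{p}_+;f\diamond g)=\bm{p}_+-\nabla_{\bm{x}_+}(f\diamond g)(\bm{x}_+)$ and using the symmetry of $P(K)$ together with $(KP(K))^T=P(K)K^T$, the gradient splits as $P(K)\nabla f(\cdot)+P(K)K^T\nabla g(\cdot)$; the inner arguments then collapse to $\bm{x}$ and $\bm{x}'$ by the inversion formulas, and combining with the $\bm{p}_+$ expression reassembles the LHS.

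The main obstacle, I expect, is keeping the linear algebra tidy in the genuinely rectangular case $n\neq n'$, where $P(K)=(I+K^TK)^{-1/2}$ and its dual $P'(K):=(I+KK^T)^{-1/2}$ live in different-sized spaces and must be shuttled back and forth via the push-through identity. The scalar analog of the whole argument fits on a single line; the work is almost entirely in verifying the matrix identities, including the transposition relating $P(K)K^T$ and $KP(K)$ that the chain rule enforces, cleanly via the SVD.
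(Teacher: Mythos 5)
Your proposal is correct and follows essentially the same route as the paper's proof in App.~\ref{sec:generalized_coupling_appx}: propagate the quadratures through the three beamsplitters of the generalized coupling, identify $O^{T}TO=P(K)$, invert the linear relations so that the arguments of $f$ and $g$ collapse to $P(K)\bm{x}_{+}+K^{T}\bm{s}$ and $KP(K)\bm{x}_{+}-\bm{s}$, and close with the chain rule. (Your coefficient $P(K)K^{T}$ on the ancilla term agrees with the paper's $O^{T}R^{T}O'$, and the transposition identity you flag is exactly what reconciles it with the $KP(K)$ appearing in the theorem statement.)
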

The proof of Theorem \ref{thm:diamond_relation} is given in App.~\ref{sec:generalized_coupling_appx}. For a 3rd-order Hamiltonian $V$ and a 3rd-order function $f$, we have
\begin{align}
    (V \diamond_{K,\bm{s}} f)(x)&=V(P(K)\bm{x} + K^T \bm{s})^{\otimes 3}+f(K P(K)\bm{x} - \bm{s})^{\otimes 3}\\
    &=(V+fK^{\otimes 3})(P(K)\bm{x})^{\otimes 3}+O(\bm{x}^2)
\end{align}
Thus, if one finds a function $f$ such that
\begin{align}
    V+fK^{\otimes 3}=0,\label{eq:ancilla_cond_3}
\end{align}
$V \diamond_{K,\bm{s}} f$ becomes a 2nd-order function of $\bm{x}$, and so the measuremennt of $\bm{m}(\bm{x}_{\mathrm{out}},\bm{p}_{\mathrm{out}};V \diamond_{K,\bm{s}} f)$ can be performed using beamsplitters and homodyne measurement using the same method as in Sec.~\ref{sec:mode_wise}. Therefore, if we use the eigenstate of $\bm{m}(\bm{x}',\bm{p}';f)$ as the ancillary states, the measurement of $\bm{m}(\bm{x}_{\mathrm{in}},\bm{p}_{\mathrm{in}};V)$ can be implemented with the scheme in Fig.~\ref{fig:3rd_order_general}.

From the form of the condition Eq.~(\ref{eq:ancilla_cond_3}), this generalized linear coupling effectively applies a multi-mode Gaussian operation
\begin{align}
    \bm{x}\to K\bm{x}
\end{align}
to the input state. Note that the case of mode-wise HBS coupling described in Sec.~\ref{sec:mode_wise} corresponds to the case where $K=I$.
\begin{figure}[ht]
    \centering
    \subfloat[]{\includegraphics[width=1.0\linewidth]{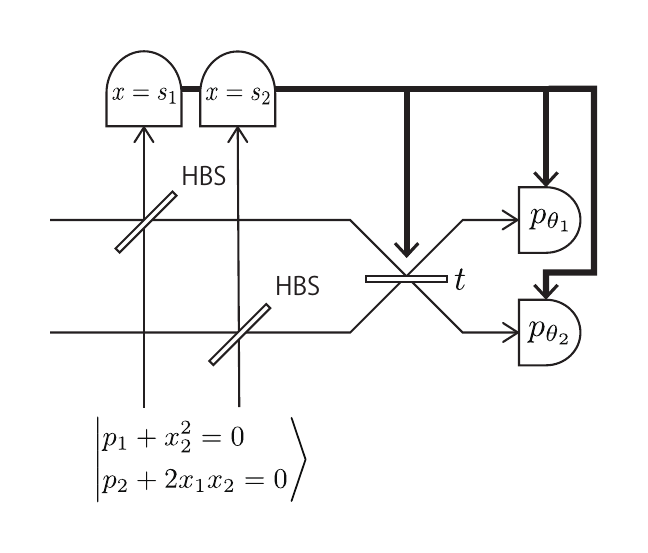}\label{fig:cubic_qnd}}\\
    \subfloat[]{\includegraphics[width=1.0\linewidth]{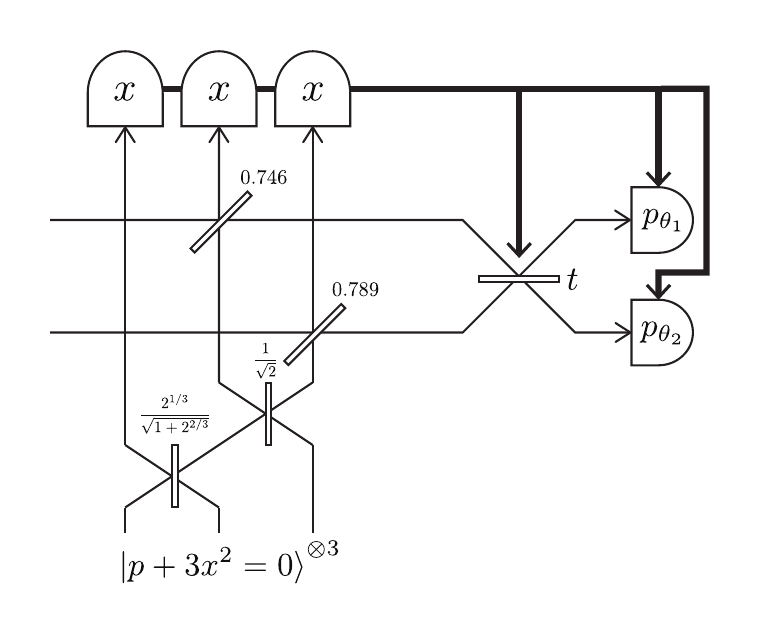}\label{fig:cubic_qnd_general}}
    \caption[]{Implementation of the cubic-QND gate. \subref{fig:cubic_qnd} When mode-wise coupling is used. \subref{fig:cubic_qnd_general} When single-mode ancillary states and generalized linear coupling are used.}
\end{figure}
\begin{figure}[ht]
    \centering
    \subfloat[]{\includegraphics[width=1.0\linewidth]{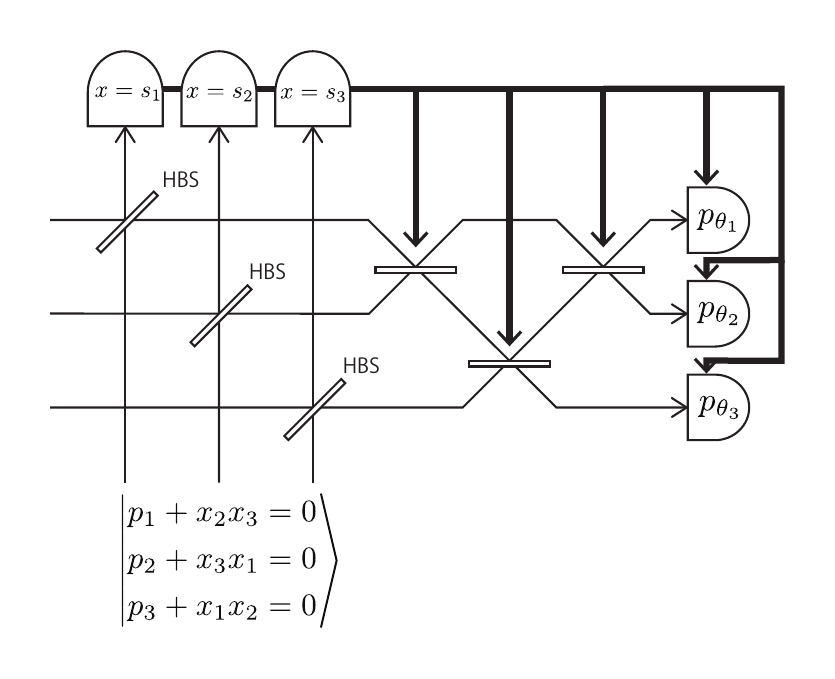}\label{fig:toffoli}}\\
    \subfloat[]{\includegraphics[width=1.0\linewidth]{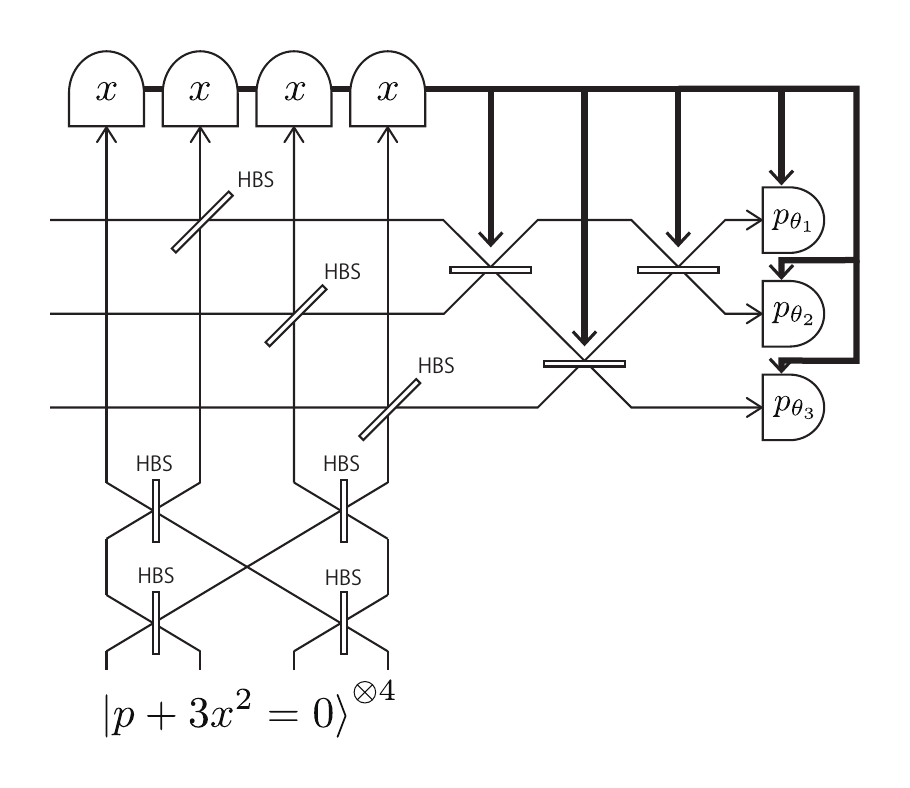}\label{fig:toffoli_general}}
    \caption[]{Implementations of Toffoli gates. \subref{fig:toffoli} When mode-wise coupling is used. \subref{fig:toffoli_general} When single-mode ancillary states and generalized linear coupling are used.}
\end{figure}
\section{Examples of 3rd-order gates}\label{sec:3rd_order_examples}
\subsection{Cubic-QND gate}\label{sec:cubic_qnd}
The simplest example of a 3rd-order multi-mode Hamiltonian is the cubic-QND gate, whose Hamiltonian can be written as
\begin{align}
    V(x_1,x_2)=x_1x_2^2.\label{eq:cubic-qnd}
\end{align}
This can be implemented using the scheme in Fig.~\ref{fig:cubic_qnd} which consists of half beamsplitters, feedforwarded variable beamsplitters (VBSs), homodyne measurements, and a two-mode ancillary state satisfying
\begin{align}
\begin{split}
    &p_1+x_2^2=0,\\
    &p_2+2x_1x_2=0. 
\end{split}\label{eq:cubic-qnd-ancilla}
\end{align}
Note that a physical approximation of this state is discussed in Ref.~\cite{multimode_coupling}. The transmittance of the variable beamsplitter $t$ and the phases $\theta_i$ of the homodyne measurement are determined from the measurement outcomes $s_1,s_2$, by performing the following eigenvalue decomposition:
\begin{align}
    \sqrt{2}\mqty(0&s_2\\s_2&s_1)&=\mqty(t&-r\\r&t)\mqty(\tan\theta_1&0\\0&\tan\theta_2)\mqty(t&r\\-r&t)
\end{align}

An alternative way of implementing the gate can be obtained by rewriting Eq.~(\ref{eq:cubic-qnd}) as
\begin{align}
    V(x_1,x_2)=\frac{1}{6}\qty[(x_1+x_2)^3+(x_1-x_2)^3-2x_1^3]
\end{align}
(such a decomposition is called \textit{Waring decomposition} \cite{waring}, see also Sec.~\ref{sec:waring}). From this, we can take
\begin{align}
    f(x_1,x_2,x_3)=-(x_1^3+x_2^3+x_3^3)
\end{align}
and
\begin{align}
    K=\frac{1}{\sqrt[3]{6}}\mqty(1&1\\1&-1\\-\sqrt[3]{2}&0)
\end{align}
in the scheme of Fig.~\ref{fig:3rd_order_general}.
This $K$ has a SVD
\begin{align}
    K=O'\cdot \frac{1}{\sqrt[3]{6}}\mqty(\sqrt{2+2^{-2/3}}&0\\0&\sqrt{2}\\0&0),
\end{align}
where the orthogonal matrix $O'$ can be expressed as a product of two two-mode beamsplitter matrices:
\begin{align}
\begin{split}
O'=&\mqty(\frac{1}{\sqrt{2}}&-\frac{1}{\sqrt{2}}&0\\\frac{1}{\sqrt{2}}&\frac{1}{\sqrt{2}}&0\\0&0&1) \mqty(\frac{2^{1/3}}{\sqrt{1 + 2^{2/3}}}& 0& \frac{1}{\sqrt{1 + 2^{2/3}}}\\ 0& 1& 0\\ -\frac{1}{\sqrt{1 + 2^{2/3}}}&0& \frac{2^{1/3}}{\sqrt{1 + 2^{2/3}}}).
\end{split}
\end{align}
Thus, the gate can be implemented using the circuit in Fig.~\ref{fig:cubic_qnd_general}. Here three cubic-phase states (CPSs) $\ket{p+3x^2=0}^{\otimes 3}$ are used as the ancillary state.

In the first case using the two-mode ancilla of Eq.~(\ref{eq:cubic-qnd-ancilla}), the number of ancillary modes (two non-Gaussian, two Gaussian) is less than for the decomposition of the gate into three CPGs, which requires three non-Gaussian and three Gaussian ancillary modes \cite{cpg_gkp}. Even in the second case using three CPSs as the ancillary states, the number of the Gaussian ancillary modes is reduced to two without changing the non-Gaussian ancillary states.

\subsection{Toffoli gate}
Another important example is the CV Toffoli gate \cite{cpg_gkp}
\begin{align}
    V(x_1,x_2,x_3)=x_1x_2x_3
\end{align}
which provides a universal gate set together with the Hadamard gate. The Toffoli gate can be implemented with the scheme in Fig.~\ref{fig:toffoli}, using a three-mode ancillary state satisfying
\begin{align}
\begin{split}
    &p_1+x_2x_3=0,\\
    &p_2+x_3x_1=0,\\
    &p_3+x_1x_2=0,
\end{split}
\end{align}
and a three-mode VBS, which can be realized using three two-mode VBSs. The orthogonal matrix $O$ corresponding to the VBS and the phases $\theta_i$ of the homodyne measurements can be obtained from the following eigenvalue decomposition:
\begin{align}
\sqrt{2}\mqty(0&s_3&s_2\\s_3&0&s_1\\s_2&s_1&0)=O^T\mqty(\tan\theta_1&0&0\\0&\tan\theta_2&0\\0&0&\tan\theta_2)O
\end{align}

This implementation of the Toffoli gate requires three non-Gaussian and three Gaussian ancillary modes, which is a reduced number of modes compared to a known decomposition of the gate into four CPGs, requiring four non-Gaussian and four Gaussian ancillary modes \cite{cpg_gkp}. Similary to the example of the cubic-QND gate in Sec.~\ref{sec:cubic_qnd}, one can also use four CPSs as the ancillary state, keeping the number of squeezed ancillary states at three, as in Fig.~\ref{fig:toffoli_general}. This is because the Waring decomposition of $V$ is given by
\begin{align}
\begin{split}
    V(x)=&\frac{1}{24}[(x_1+x_2+x_3)^3+(-x_1-x_2+x_3)^3\\
    &+(-x_1+x_2-x_3)^3+(x_1-x_2-x_3)^3],
\end{split}
\end{align}
thus one can take
\begin{align}
    &f(\bm{x})=x_1^3+x_2^3+x_3^3+x_4^3,\\
    K&=\frac{1}{\sqrt[3]{24}}\mqty(1&1&1\\-1&-1&1\\-1&1&-1\\1&-1&-1)=O'\cdot\frac{1}{\sqrt[3]{3}}\mqty(1&0&0\\0&1&0\\0&0&1\\0&0&0),\\
O'&=\mqty(1&1&0&0\\-1&1&0&0\\0&0&1&1\\0&0&-1&1)\mqty(1&0&0&0\\0&0&1&0\\0&1&0&0\\0&0&0&1)\mqty(1&1&0&0\\-1&1&0&0\\0&0&1&1\\0&0&-1&1).
\end{align}
\section{Implementation of higher-order Hamiltonians}\label{sec:higher_order}
\begin{figure}[ht]
\centering
\begin{tikzpicture}[node distance=2cm]
\node (star_seq) [process] {Find $f_1,\dots,f_m, K_1,\dots,K_m$ satisfying Eq.~(\ref{eq:star_chain}).\\(Sec.~\ref{sec:reduction})};
\node (dia_seq) [process, below of=star_seq] {Obtain $f_1,\dots,f_m, K_1,\dots,K_m$ and $A(s),b(s),c(s)$ satisfying Eq.~(\ref{eq:diamond_chain}). Build the system in Fig.~\ref{fig:general_case}.\\
(Theorem \ref{thm:diamond_star_seq})};
\node (kff) [process_on, below of=dia_seq] {Measure $\bm{s}_{1},\dots, \bm{s}_{i}$ and perform non-linear feedforward to the VBSs $K_{i+1}$\\
(Sec.~\ref{sec:reduction})};
\node (ff) [process_on, below of=kff] {Using $\bm{s}_{1},\dots, \bm{s}_{m}$, perform non-linear feedforward to VBS $O(s)$ and homodyne phases $\theta_i$.\\
(Eq.~(\ref{eq:evd}), App.~\ref{sec:linear_measurement})};
\node (post_process) [process_on, below of=ff] {Perform homodyne measurement $\bm{p}_{\bm{\theta}}$, post-process $\bm{p}_{\bm{\theta}}$ to obtain the measurement outcomes of $\bm{m}(V)$.\\
(Eq.~(\ref{eq:evd}), App.~\ref{sec:linear_measurement})
};

\draw [arrow] (star_seq) -- (dia_seq);
\draw [arrow] (dia_seq) -- (kff);
\draw [arrow] (kff) -- (ff);
\draw [arrow] (ff) -- (post_process);
\draw node [left,below of=post_process] {\colorbox{orange!30}{offline}\ \ \ \colorbox{green!30}{online}};
\end{tikzpicture}
\caption{Procedure to implement the measurement $\bm{m}(V)$ corresponding to the general multi-mode Hamiltonian $V$ (Eq.~(\ref{eq:general_hamiltonian})). For the details, see Secs.~\ref{sec:higher_order} and \ref{sec:reduction}.}
\label{fig:procedure}
\end{figure}
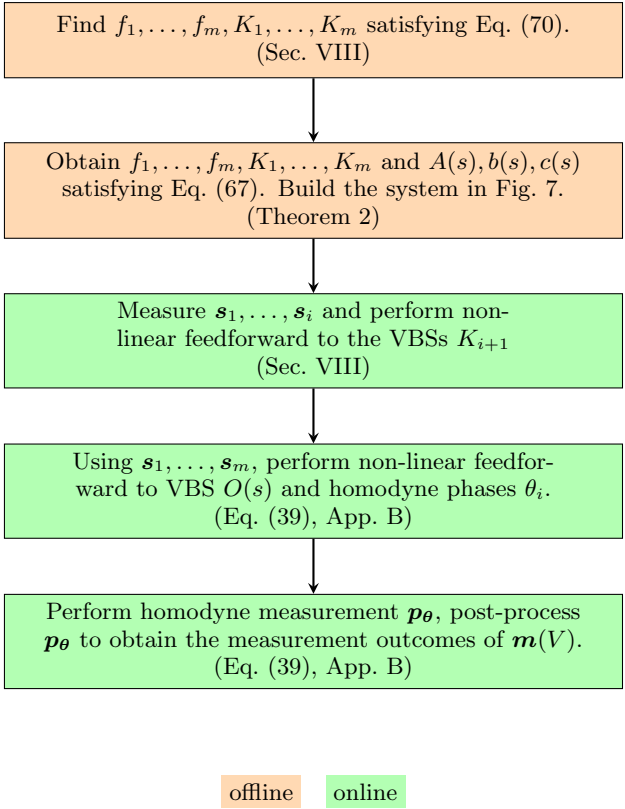

In this section, we consider the general $N$-th order Hamiltonian
\begin{align}
V(\bm{x})&=\sum_{k=1}^{N} V^{(k)}\bm{x}^{\otimes k},\label{eq:general_hamiltonian}
\end{align}
for $n$ modes ($\bm{x}=(x_1,\dots,x_n)$). Ref.~\cite{general_npg} shows that the measurement corresponding to a single-mode quadrature phase gate $e^{ix^N}$ of an arbitrary order $N$ can be implemented using $N-2$ non-Gaussian ancillary states and linear optics, by sequentially decreasing the order of the measured polynomial using nonlinear feedforward operations. We generalize this idea to the multi-mode case Eq.~(\ref{eq:general_hamiltonian}). In Fig.~\ref{fig:procedure}, we summarize the whole procedure to implement the measurement of $\bm{m}(V)$, which is explained in this and the following section.
\begin{figure*}[ht]
    \centering
   \includegraphics[width=1.0\linewidth]{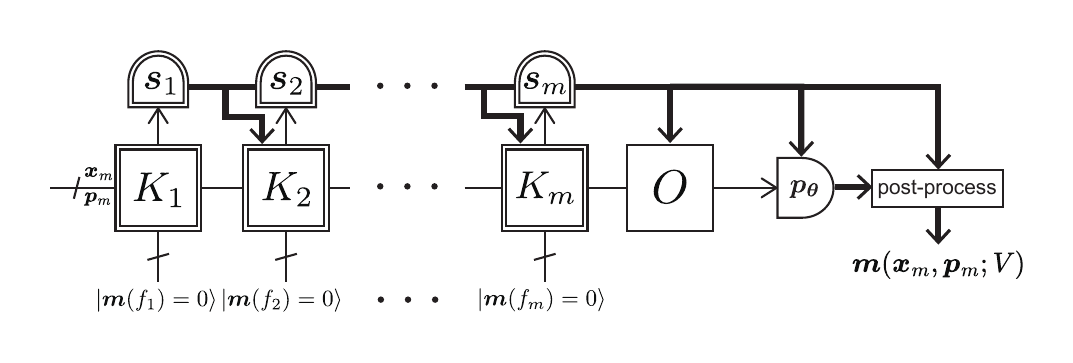}
    \caption{Implementation of higher-order Hamiltonian, consisting of multiple non-Gaussian ancillary states, linear optics (generalized linear coupling, Sec.\ref{sec:generalized_coupling}), and homodyne measurements. Part of the linear optics and measurements are adaptively changed depending on previous measurement outcomes.}
    \label{fig:general_case}
\end{figure*}

From Theorem \ref{thm:diamond_relation}, if one can find $f_1,\dots,f_m$, $K_1,\dots,K_m$ and $A(s),b(s),c(s)$ satisfying
\begin{align}
\begin{split}
    &(V \diamond_{K_1,\bm{s}_1} f_1 \diamond_{K_2,\bm{s}_2} \dots \diamond_{K_m,\bm{s}_m} f_m)(x)\\
    &=x^TA(s)x+b(s)^Tx+c(s)\label{eq:diamond_chain}
\end{split}
\end{align}
for arbitrary measurement outcomes $\bm{s}_i$, then the measurement of $\bm{m}(\bm{x},\bm{p};V)$ can be implemented using the scheme in Fig.~\ref{fig:general_case}. Here the beamsplitter matrix $O$ and the phases of the homodynes $\theta_i$ are determined via the eigenvalue decomposition of $A(s)$ (Eq.~(\ref{eq:evd})).
For finding such a sequence, it is convenient to define an operator $\star_{K,\bm{s}}$ as
\begin{align}
    (f\star_{K,\bm{s}} g)(\bm{x})=f(\bm{x})+g(K\bm{x}-\bm{s}),
\end{align}
because we have the following theorem.

\begin{theorem}\label{thm:diamond_star_seq}
For any sequences $f_1,\dots,f_m$, $K_1,\dots,K_m$ there exist $K'_1,\dots,K'_m$ and $A$ such that for any $s_1,\dots,s_m$, there exist $s'_1,\dots,s'_m$ and $b$ satisfying
\begin{align}
\begin{split}
    &(V \diamond_{K'_1,s_1} f_1 \diamond_{K'_2,s_2} \dots \diamond_{K'_m,s_m} f_m)(\bm{x})\\
    =&(V \star_{K_1,s_1'} f_1 \star_{K_2,s_2'} \dots \star_{K_m,s_m'} f_m)(A\bm{x}+b)
\end{split}\label{eq:chain_equivalence}
\end{align}
\end{theorem}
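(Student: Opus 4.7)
The plan is to proceed by induction on $m$, exploiting a single algebraic identity: one $\diamond$-step equals one $\star$-step with a shifted offset, precomposed with an affine change of variables on its argument. First I would unpack the definition of $\diamond$ and rewrite
\begin{align*}
(f \diamond_{K,\bm{s}} g)(\bm{x}) = f(\bm{y}) + g(K\bm{y} - (I+KK^T)\bm{s}),
\end{align*}
where $\bm{y} = P(K)\bm{x} + K^T\bm{s}$. This is valid because $P(K) = (I+K^TK)^{-1/2}$ is positive definite and hence invertible, and it uses the identity $KP(K) = (I+KK^T)^{-1/2}K$, a direct consequence of the push-through identity $K(I+K^TK)^{-1} = (I+KK^T)^{-1}K$. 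This immediately settles the base case $m=1$ by taking $K_1' = K_1$, $A = P(K_1)$, $\bm{b} = K_1^T\bm{s}_1$, and $\bm{s}_1' = (I+K_1K_1^T)\bm{s}_1$.

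For the inductive step I would assume the $m$-term version with data $K_1',\dots,K_m',A_m$ and $\bm{b}_m,\bm{s}_1',\dots,\bm{s}_m'$, so that the length-$m$ $\diamond$-chain equals $(V\star_{K_1,\bm{s}_1'}\cdots\star_{K_m,\bm{s}_m'} f_m)(A_m\bm{x} + \bm{b}_m)$. Applying one more outer $\diamond$ with parameters $K'_{m+1},\bm{s}_{m+1}$ and expanding via the single-step identity gives
\begin{align*}
&(V\star\cdots\star f_m)(A_m P(K'_{m+1})\bm{x} + A_m K'_{m+1}{}^T \bm{s}_{m+1} + \bm{b}_m) \\
&\qquad + f_{m+1}(K'_{m+1}P(K'_{m+1})\bm{x} - \bm{s}_{m+1}).
\end{align*}
Matching this against the desired $(V\star\cdots\star f_m\star_{K_{m+1},\bm{s}_{m+1}'} f_{m+1})(A_{m+1}\bm{x}+\bm{b}_{m+1})$, I would read off $A_{m+1} = A_m P(K'_{m+1})$ and $\bm{b}_{m+1} = \bm{b}_m + A_m K'_{m+1}{}^T\bm{s}_{m+1}$ from the first term, and then impose the equality $K'_{m+1}P(K'_{m+1})\bm{x} - \bm{s}_{m+1} = K_{m+1}(A_{m+1}\bm{x} + \bm{b}_{m+1}) - \bm{s}_{m+1}'$ inside $f_{m+1}$. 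Cancelling the invertible factor $P(K'_{m+1})$ from the $\bm{x}$-coefficient forces $K'_{m+1} = K_{m+1}A_m$, and the constant part then solves for $\bm{s}_{m+1}' = \bm{s}_{m+1} + K_{m+1}\bm{b}_{m+1}$.

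The argument is essentially bookkeeping and I do not expect a deep obstacle, but the one point worth emphasizing is the separation of which quantities depend only on the $K_i$'s (so that they can legitimately be chosen before seeing any measurement outcome, as the theorem's quantifier structure demands) and which may depend on the $\bm{s}_i$'s. The recursions $K'_{m+1}=K_{m+1}A_m$ and $A_{m+1}=A_m P(K_{m+1}A_m)$ involve no $\bm{s}_j$, while $\bm{b}_{m+1}$ and $\bm{s}_{m+1}'$ depend linearly (and causally) on the previously fixed $\bm{s}_j$'s. Verifying this quantifier bookkeeping, together with the single-step rewriting, yields the theorem.
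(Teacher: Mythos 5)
Your proof is correct and takes essentially the same route as the paper's: the paper states the single identity $((f\circ(A,b))\diamond_{KA,\bm{s}}g)=(f\star_{K,\tilde{\bm{s}}}g)\circ(A,b)\circ(P(KA),A^TK^T\bm{s})$ and says to iterate it, which is exactly your inductive step with the same recursions $K'_{j+1}=K_{j+1}A_j$ and $A_{j+1}=A_jP(K_{j+1}A_j)$. Your explicit bookkeeping of which quantities depend only on the $K_i$ versus on the $\bm{s}_i$ is a welcome elaboration of a point the paper leaves implicit.
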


The proof of Theorem \ref{thm:diamond_star_seq} is given in App.~\ref{sec:diamond_star_seq}. From Theorem \ref{thm:diamond_star_seq}, it is sufficient to find a sequence $f_k,K_k$ and $A,b,c$ such that
\begin{align}
    (V \star_{K_1,s_1} f_1 \star_{K_2,s_2} \dots \star_{K_m,s_m} f_m)(x)=x^TAx+b^Tx+c.\label{eq:star_chain}
\end{align}
More specifically, if we write 
\begin{align}
    (V \star_{K_1,s_1} f_1 \star_{K_2,s_2} \dots \star_{K_{j},s_{j}} f_{j})(x)&=\sum_{k=1}^{N_j}V^{(j,k)}x^{\otimes k},\label{eq:v_coeffs}
\end{align}
where $N_j=\mathrm{deg}(V \star_{K_1,s_1} f_1 \star_{K_2,s_2} \dots \star_{K_{j},s_{j}} f_{j})$, we have
\begin{align}
V^{(j+1,k)}=V^{(j,k)}+\sum_{l=k}^{N_j}\mqty(l\\k)f_{j+1}^{(l)}(-s)^{\otimes l-k}K_{j+1}^{\otimes k} \label{eq:poly_reduction_general}.
\end{align}
In particular, we have
\begin{align}
V^{(j+1,N_j)}=V^{(j,N_j)}+f_{j+1}^{(N_j)}K_{j+1}^{\otimes N_j}.
\end{align}
Thus if one takes $f_{j+1}$ and $K_{j+1}$ so that 
\begin{align}
V^{(j,N_j)}+f_{j+1}^{(N_j)}K_{j+1}^{\otimes N_j}=0,\label{eq:cond_order_reduction}
\end{align}
one has $N_j=N-j$ and Eq.~(\ref{eq:star_chain}) can be satisfied with $m=N-2$ steps. When $f_{j+1}$ is taken to be a homogeneous polynomial of order $N-j$ satisfying
\begin{align}
    V^{(j,N_j)}+f_{j+1}K_{j+1}^{\otimes N_j}=0,\label{eq:ancilla_cond}
\end{align}
Eq.~(\ref{eq:poly_reduction_general}) becomes
\begin{align}
V^{(j+1,k)}=V^{(j,k)}+\mqty(N_j\\k)f_{j+1}(-s)^{\otimes N_j-k}K_{j+1}^{\otimes k} \label{eq:poly_reduction}.
\end{align}

If one can adaptively prepare the ancillary non-Gaussian states depending on the measurement outcomes $s_i$, the whole process can be implemented using only $(N-2)\times n$ ancillary modes by choosing
\begin{align}
    f_k=-V^{(k-1,N+1-k)},K_k=I.
\end{align}
However, in an actual setup, it is often difficult to prepare non-Gaussian states adaptively, and a better strategy is to prepare fixed ancillary states $f_k$ and adaptively change $K_k$. When this strategy is taken, $f_k$ should not depend on the measurement outcomes $s_1,\dots, s_{k-1}$. In the following sections, we consider this situation.

\section{Reduction of the number of ancillary modes}\label{sec:reduction}
In this section, we consider the problem to minimize the number of the non-Gaussian ancillary modes, when those states cannot be adaptively prepared depending on the previous measurement outcomes $s$. Although the general solution for finding the global minimum is still an open question, we give several observations and heuristic strategies. In Sec.~\ref{sec:general_examples}, we apply those strategies to some examples and compare the performances.

We recommend the reader to first check the example in Sec.~\ref{sec:small_example} before reading the following discussion, for getting an intuition about our idea.
\subsection{Sign problem}\label{sec:sign_problem}
Before going into the discussion about the number of the ancillary modes, we first describe a subtle problem caused by the indefinite sign of the measurement outcomes. For example, suppose one wants to implement a single-mode quadrature phase gate $V(x)=x^N$ \cite{general_npg}. One can naturally choose $f_1(x)=-x^N, K_1=1$ and get
\begin{align}
    V\star_{1,s}f_1&=x^N-(x-s)^N\\
    &=Nsx^{N-1}+\dots
\end{align}
Now one wants to choose $f_2(x)=-x^{N-1}$ and $K_2=(Ns)^{\frac{1}{N-1}}$ so that $f_2(K_2x)=-Nsx^{N-1}$, but this works only when $N-1$ is odd or $s>0$, because $K_2$ should be a real number.

Solutions of this problem would be either (a) to allow finite success probability of the gate and assume $s>0$, or (b) to prepare two ancillary states with different signs (e.g. $x^{n-1}$ and $-x^{n-1}$) and switch them depending on the sign of the measurement outcome $s$. When we choose (a), the gate is no longer deterministic, while when we take the option (b), the number of necessary ancillary modes increases. For example, in the case of the quadrature phase gate, one needs $N-2+\lfloor (N-2)/2 \rfloor$ modes instead of $N-2$ as in Ref.~\cite{cpg_gkp}.

The same problem happens also in the general cases of multi-mode non-Gaussian gates that we consider here. However, because this problem happens in all the schemes that we compare in Sec.~\ref{sec:strategies} and it only causes increase of the number of modes by a constant factor, in the rest of the paper we ignore this problem for simplicity, and assume that all measurement outcomes $s_k$ are positive.

\subsection{The a-rank of tensor}
For the coefficients of the polynomial (Eq.~(\ref{eq:v_coeffs})), we write 
\begin{align}
    V^{(j,N_j)}=V^{(j,N_j)}(s),
\end{align}
as a function of all previous measurement outcomes $s=(s_1,\dots,s_{j-1})$. We define \textit{a-rank} of $V^{(j,N_j)}(s)$ as the minimum dimension of $f_j$ that does not depend on $s$ and satisfies Eq.~(\ref{eq:ancilla_cond}). Equivalently, for a tensor $A(s)$ depending on $s$, we define \textit{a-rank} of $A(s)$ as
\begin{align}
\begin{split}
    \arank(A(s))=&\min \dim F \\
    &\mathrm{s.t.}\ \exists K(s), A(s)=F K(s)^{\otimes k}.
\end{split}
\end{align}
Using this, the number of necessary ancillary modes is upper-bounded by
\begin{align}
    \sum_{k=1}^{N-2}\arank(V^{(k-1,N+1-k)}(s)),
\end{align}
where $\arank(V^{(0,N)})=\arank(V^{(N)})=n$ (the number of the input modes).
The a-rank has the following properties.
\begin{theorem}\label{thm:arank_convex}
\begin{align}
    \arank(A(s) K(s)^{\otimes k})&\leq\arank(A(s))\\
    \arank(A_1(s)+A_2(s))&\leq\arank(A_1(s))+\arank(A_2(s))\label{eq:arank_additivity}
\end{align}
\end{theorem}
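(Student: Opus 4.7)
The plan is to prove the two inequalities separately, in both cases by explicitly constructing a decomposition of the form $F K(s)^{\otimes k}$ for the left-hand side using the minimum decompositions of the ingredients on the right-hand side.

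For the first inequality, I would start from the minimum decomposition $A(s) = F_0 K_0(s)^{\otimes k}$ with $\dim F_0 = \arank(A(s))$. The key observation is that contracting each index of a tensor first with $K_0(s)$ and then with $K(s)$ is the same as contracting with the single matrix $K_0(s) K(s)$. A short index-level check gives
\begin{equation}
A(s) K(s)^{\otimes k} = F_0 \bigl(K_0(s)K(s)\bigr)^{\otimes k},
\end{equation}
which is a valid a-rank decomposition of the left-hand side with fixed tensor $F_0$, so its a-rank is at most $\dim F_0 = \arank(A(s))$.

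For the additivity bound (\ref{eq:arank_additivity}), I would take minimum decompositions $A_i(s) = F_i K_i(s)^{\otimes k}$ with $\dim F_i = \arank(A_i(s))$, $i=1,2$. Put $d_i = \dim F_i$. The plan is to stack the $K_i(s)$ vertically into a single $(d_1+d_2)\times n$ matrix
\begin{equation}
K(s) = \begin{pmatrix} K_1(s) \\ K_2(s) \end{pmatrix},
\end{equation}
and define a block-diagonal fixed tensor $F$ of order $k$ in dimension $d_1+d_2$ by
\begin{equation}
F_{j_1,\dots,j_k} = \begin{cases} (F_1)_{j_1,\dots,j_k} & \text{if every } j_a \in \{1,\dots,d_1\}, \\ (F_2)_{j_1-d_1,\dots,j_k-d_1} & \text{if every } j_a \in \{d_1+1,\dots,d_1+d_2\}, \\ 0 & \text{otherwise}. \end{cases}
\end{equation}
Expanding $F K(s)^{\otimes k}$, the indices must all lie in a single block for the summand to survive, so all mixed terms vanish and one recovers exactly $F_1 K_1(s)^{\otimes k}+F_2 K_2(s)^{\otimes k} = A_1(s)+A_2(s)$. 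This gives an a-rank decomposition of dimension $d_1+d_2$, proving the bound.

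The routine points to check are just that $F$ is symmetric when $F_1$ and $F_2$ are (which follows because the nonzero blocks are symmetric and sit on ``diagonal'' index patterns), and that the cross terms in $F K(s)^{\otimes k}$ really do vanish by construction. There is no serious obstacle: the a-rank is defined in terms of a single matrix $K(s)$ applied identically to every mode, so the constructions only need to be compatible with this uniform application, which they are by the direct composition identity in the first part and by the block-diagonal arrangement in the second.
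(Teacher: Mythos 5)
Your proof is correct and follows essentially the same route as the paper: the first inequality by composing the matrices inside the $\otimes k$ power, and the second by forming the direct sum $F_1\oplus F_2$ of the fixed tensors together with a combined coupling matrix. Your explicit vertical stack $K(s)=\bigl(\begin{smallmatrix}K_1(s)\\ K_2(s)\end{smallmatrix}\bigr)$ is in fact the dimensionally correct form of what the paper displays as a block-diagonal $K_1(s)\oplus K_2(s)$, so no further changes are needed.
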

\begin{proof}
The first property directly follows from the definition.

For the second property, suppose we have decompositions
\begin{align}
    A_1(s)&=B_1K_1(s)^{\otimes k}\\
    A_2(s)&=B_2K_2(s)^{\otimes k}.
\end{align}
where $B_1$ and $B_2$ are $n_1$-dimensional and $n_2$-dimensional  tensors. We define a $(n_1+n_2)$-dimensional tensor $B_1 \oplus B_2$ as
\begin{align}
    \qty(B_1 \oplus B_2)x^{\otimes k}&=B_1x_1^{\otimes k}+B_2x_2^{\otimes k}
\end{align}
where we divide $x=(x_1,\dots,x_{n_1+n_2})^T$ into $x_1=(x_1,\dots,x_{n_1})^T$ and $x_2=(x_{n_1+1},\dots,x_{n_2})^T$, and define the matrix $K_1(s)\oplus K_2(s)$ as
\begin{align}
    K_1(s)\oplus K_2(s)=\mqty(K_1(s)&0\\0&K_2(s)).
\end{align}
Then we have
\begin{align}
    A_1(s)+A_2(s)&=\qty(B_1 \oplus B_2)\qty(K_1(s)\oplus K_2(s))^{\otimes k},
\end{align}
which leads to Eq.~(\ref{eq:arank_additivity}).
\end{proof}

\subsection{Chow decomposition and b-rank}
The following decomposition of a polynomial $A(x)$ is called \textit{Chow decomposition} \cite{chow_rank},
\begin{align}
    A(x)=\sum_{i=1}^r \prod_{j=1}^{l_i} \qty(\sum_k m^{(i)}_{jk}x_k)^{n^{(i)}_j}.\label{eq:chow_decomp}
\end{align}
Here, $r$ is called \textit{Chow rank} and denoted as $\crank(A)$. Here we allow $m^{(i)}$ in general to depend on $s$. We define \textit{b-rank} of $A$ as
\begin{align}
    \brank(A)=\sum_{i=1}^r l_i.
\end{align}
For a homogeneous polynomial $A$ of order $n$, b-rank is related to the Chow rank by
\begin{align}
    \brank(A)=n\cdot \crank(A).
\end{align}
Chow decomposition and b-rank are useful for finding a-rank of tensors, because we have the following theorem.

\begin{theorem}\label{thm:chow_decomp_arank}
The a-rank of a tensor $A(s)$ is upper-bounded as
\begin{align}
\arank(A(s))\leq \brank(A(s)).
\end{align}
\end{theorem}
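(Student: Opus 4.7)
The plan is to turn a Chow decomposition of $A(s)$ directly into a factorization of the form $A(s) = F\, K(s)^{\otimes k}$ in which $F$ is $s$-independent and has dimension equal to $\brank(A(s))$.

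First I would fix a Chow decomposition of $A(s)$ that realizes its b-rank. Identifying the rank-$k$ symmetric tensor $A(s)$ with its associated homogeneous polynomial $A(s;\bm{x}) := A(s)\bm{x}^{\otimes k}$, and allowing the coefficients to depend on $s$ as the paper explicitly permits, write
\begin{align*}
A(s;\bm{x}) = \sum_{i=1}^{r} \prod_{j=1}^{l_i} L_{ij}(s;\bm{x})^{n^{(i)}_j},
\qquad L_{ij}(s;\bm{x}) = \sum_{\ell} m^{(i)}_{j\ell}(s)\, x_\ell,
\end{align*}
with $\sum_j n^{(i)}_j = k$ by homogeneity and $\sum_i l_i = \brank(A(s))$.

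Next I would bundle every linear form $L_{ij}$ into a single matrix. Let $m := \sum_i l_i$, index the coordinates of $\mathbb{R}^m$ by the pairs $(i,j)$, and define the $m \times n$ matrix $K(s)$ by setting its $(i,j)$-th row equal to the coefficient vector $(m^{(i)}_{j\ell}(s))_\ell$, so that $(K(s)\bm{x})_{(i,j)} = L_{ij}(s;\bm{x})$. Then define the $s$-independent polynomial on $\mathbb{R}^m$
\begin{align*}
F(\bm{y}) = \sum_{i=1}^{r} \prod_{j=1}^{l_i} y_{(i,j)}^{n^{(i)}_j},
\end{align*}
which is homogeneous of degree $k$ and therefore corresponds to a fixed symmetric tensor $F$ of rank $k$ in dimension $m$. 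By construction $F(K(s)\bm{x}) = A(s;\bm{x})$ as polynomials in $\bm{x}$, which at the tensor level is exactly the identity $A(s) = F\, K(s)^{\otimes k}$. Since $\dim F = m = \brank(A(s))$, the definition of a-rank immediately gives $\arank(A(s)) \leq \brank(A(s))$.

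The argument is essentially bookkeeping and I do not anticipate a hard step. The only subtlety is the handling of the $s$-dependence: the Chow decomposition itself is allowed to depend on $s$, whereas the a-rank bound demands $F$ be totally fixed. This works cleanly because Chow decomposition confines every $s$-dependent parameter inside a linear form in $\bm{x}$, and that is exactly the kind of dependence a matrix factor $K(s)$ can absorb.
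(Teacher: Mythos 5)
Your proof is correct and is essentially the paper's own argument: the paper bounds each Chow summand by $l_i$ via the per-term tensors $B_i$ and matrices $M^{(i)}(s)$ and then invokes the subadditivity of a-rank (Theorem~\ref{thm:arank_convex}), ending with exactly your global factorization $A(s)=B(A(s))\,M(A(s))^{\otimes N}$ with $B(A(s))=\bigoplus_i B_i$. You simply inline that direct-sum step by assembling $F$ and the stacked matrix $K(s)$ in one go, which is a presentational rather than a substantive difference.
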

\begin{proof}
Writing each term of Eq.~(\ref{eq:chow_decomp}) as $A_i(s)$ gives:
\begin{align}
    A_i(s)x^{\otimes N}=\prod_{j=1}^{l_i} \qty(\sum_k m^{(i)}_{jk}(s)x_k)^{n^{(i)}_j},
\end{align}
and we define tensors $B_i$ as
\begin{align}
    B_ix^{\otimes N}=\prod_{j=1}^{l_i} x_j^{n^{(i)}_j}.
\end{align}
Then, we have
\begin{align}
    A_i=B_i \qty(M^{(i)}(s))^{\otimes N}
\end{align}
where the matrix $M^{(i)}(s)$ is defined as
\begin{align}
    (M^{(i)}(s)x)_j=\sum_k m^{(i)}_{jk}(s)x_k.
\end{align}
Because $B_i$ does not depend on $s$, we have
\begin{align}
    \arank(A_i(s))\leq l_i.
\end{align}
Thus, from $A(s)=\sum_i A_i(s)$ and Theorem \ref{thm:arank_convex}, it follows
\begin{align}
    \arank(A(s))\leq \sum_i l_i.
\end{align}
More explicitly, defining
\begin{align}
    B(A(s))&=\bigoplus_i B_i,\label{eq:def_b}\\
    M(A(s))&=\bigoplus_i M_i,\label{eq:def_M}
\end{align}
we have
\begin{align}
    A(s)=B(A(s))M(A(s))^{\otimes N},
\end{align}
and $B(A(s))$ is not dependent on $s$ from its definition.
\end{proof}

Furthermore, because from Eq.~(\ref{eq:poly_reduction}) $V^{(j,k)}$ has the form 
\begin{align}
    V^{(j,k)}=\sum_i A_i s^{\otimes n_i},
\end{align}
the following theorem gives another upper bound of the a-rank.

\begin{theorem}\label{thm:monomial_arank}
If $Ax^{\otimes k}$ is a monomial
\begin{align}
Ax^{\otimes k}=\prod_{i=1}^{l}x_i^{n_i},
\end{align}
then for $j<k$,
\begin{align}
\arank(A s^{\otimes j})\leq l.
\end{align}
\end{theorem}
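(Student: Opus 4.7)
The plan is to exhibit, for every $j<k$, an explicit decomposition $A\,s^{\otimes j}=F\,K(s)^{\otimes(k-j)}$ in which $F$ is a rank-$(k-j)$ tensor of dimension $l$ that does not depend on $s$. By the definition of $\arank$, this immediately yields $\arank(A\,s^{\otimes j})\leq l$.

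The starting point is the polynomial identity $A(y+s)^{\otimes k}=\prod_{i=1}^l(y_i+s_i)^{n_i}$, obtained from the hypothesis by substituting $x=y+s$. Expanding the left-hand side via the multinomial theorem gives $A(y+s)^{\otimes k}=\sum_{j=0}^k\binom{k}{j}(A\,s^{\otimes j})y^{\otimes(k-j)}$, so $(A\,s^{\otimes j})y^{\otimes(k-j)}$ equals $\binom{k}{j}^{-1}$ times the degree-$(k-j)$ component in $y$ of $\prod_i(y_i+s_i)^{n_i}$. Expanding each factor via the usual binomial and collecting terms produces an explicit polynomial in the $l$ variables $y_1,\dots,y_l$ with $s$-dependent coefficients.

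The crucial step is a two-stage rescaling that peels the $s$-dependence off. First substitute $y_i=s_i w_i$: every monomial $y_i^{m_i}s_i^{n_i-m_i}$ then collapses to $s_i^{n_i}w_i^{m_i}$, so the polynomial factorizes as $\bigl(\prod_i s_i^{n_i}\bigr)\binom{k}{j}^{-1}Q(w)$, where $Q(w)$ is the degree-$(k-j)$ component of $\prod_i(1+w_i)^{n_i}$ and is manifestly independent of $s$. Because $Q$ is homogeneous of degree $k-j$, the remaining scalar factor is absorbed by a single dilation $z_i=\lambda(s)\,w_i$ with $\lambda(s)=\bigl(\prod_i s_i^{n_i}/\binom{k}{j}\bigr)^{1/(k-j)}$. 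The result is $(A\,s^{\otimes j})y^{\otimes(k-j)}=Q(z)$ with $z=K(s)\,y$ for the $l\times n$ matrix $K(s)$ defined by $K(s)_{ii}=\lambda(s)/s_i$ for $i\leq l$ and zero elsewhere. The tensor $F$ corresponding to the fixed polynomial $Q$ then has dimension $l$ and no $s$-dependence, which completes the proof.

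The only delicate point I anticipate is the $(k-j)$-th root appearing in $\lambda(s)$, which requires selecting a real branch. This is exactly the sign problem already discussed in Sec.~\ref{sec:sign_problem}, and is resolved under the same simplifying assumption (positivity of the measurement outcomes) that the paper adopts throughout. Apart from this, every step is a direct algebraic computation, so I do not expect any further technical obstruction.
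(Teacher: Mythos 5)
Your proof is correct and follows essentially the same route as the paper's: both rescale each variable by $1/s_i$ so that the $s$-dependence collapses to the single scalar $\prod_i s_i^{n_i}$, which is then absorbed by a global dilation involving a $(k-j)$-th root, leaving a fixed $l$-dimensional tensor (your $Q$ is the paper's $A'=A1^{\otimes j}$ up to the constant factor $\binom{k}{j}$). The only difference is presentational — you derive the identity from the generating polynomial $\prod_i(y_i+s_i)^{n_i}$ whereas the paper computes directly with tensor components — and both rely on the same positivity assumption on $s$ to take the real root.
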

\begin{proof}
We define
\begin{align}
    q_j&=\qty(A s^{\otimes k})^{\frac{1}{k-j}}\\
    &=\qty(\prod_{i=1}^ls_{i}^{n_{i}})^{\frac{1}{k-j}}.
\end{align}
Note that here we assume $s_i>0$ (see Sec.~\ref{sec:sign_problem}). We also define
\begin{align}
    D_j(s)=\mathrm{diag}\qty(\frac{s_1}{q_j},\dots,\frac{s_l}{q_j}).\label{eq:def_d}
\end{align}
Then we get
\begin{widetext}
\begin{align}
    As^{\otimes j}(D_j(s)x)^{\otimes k-j}&=\sum_{i_1,\dots,i_k}A_{i_1,\dots,i_k}s_{i_1}\dots s_{i_j}\frac{s_{i_{j+1}}}{q_j}x_{i_{j+1}}\dots\frac{s_{i_{k}}}{q_j}x_{i_{k}}\\
    &=\frac{1}{q_j^{k-j}}\prod_{i=1}^l s_{i}^{n_{i}}\sum_{i_1,\dots,i_k}A_{i_1,\dots,i_k}x_{i_{j+1}}\dots x_{i_{k}}\\
    &=\sum_{i_1,\dots,i_k}A_{i_1,\dots,i_k}x_{i_{j+1}}\dots x_{i_{k}}\\
    &=A'x^{\otimes k-j}
\end{align}
\end{widetext}
where $A'$ is a tensor which does not depend on $s$,
\begin{align}
    A'=A 1^{\otimes j},
\end{align}
defined using a constant vector
\begin{align}
    1=\qty(1,\dots,1)^T.
\end{align}
Thus,
\begin{align}
    As^{\otimes k}=A'\qty(D_j(s)^{-1})^{\otimes N-k}
\end{align}
and we get
\begin{align}
\arank(A s^{\otimes j})\leq\arank(A')\leq l
\end{align}
from Theorem \ref{thm:arank_convex}.
\end{proof}
\begin{col}\label{col:chow_decomp}
For a tensor $A$ not depending on $s$ and $j<k$,
\begin{align}
    \arank(A s^{\otimes j})\leq \brank(A).
\end{align}
\end{col}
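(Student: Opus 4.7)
The plan is to chain together the Chow decomposition argument used in the proof of Theorem~\ref{thm:chow_decomp_arank} with the monomial estimate of Theorem~\ref{thm:monomial_arank}, and then glue the pieces by the subadditivity of a-rank. Because $A$ is $s$-independent, I first fix a Chow decomposition $Ax^{\otimes k}=\sum_{i=1}^{r}\prod_{j=1}^{l_i}\bigl(M^{(i)}_j\cdot x\bigr)^{n^{(i)}_j}$ in which the linear forms $M^{(i)}_j$ carry no $s$-dependence and which realizes $\sum_i l_i=\brank(A)$. Introducing the monomial tensor $B_i$ with $B_iy^{\otimes k}=\prod_{j=1}^{l_i}y_j^{n^{(i)}_j}$, and the matrix $M^{(i)}$ whose rows are the forms $M^{(i)}_j$, each summand factors as $A_i=B_i(M^{(i)})^{\otimes k}$.

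Next I would contract this factorization with $s^{\otimes j}$. Because $B_i$ is symmetric and $M^{(i)}$ acts identically on every slot, one obtains
\begin{align*}
A_is^{\otimes j}=\bigl[B_i(M^{(i)}s)^{\otimes j}\bigr]\,(M^{(i)})^{\otimes(k-j)}.
\end{align*}
I would then apply Theorem~\ref{thm:monomial_arank} to the monomial tensor $B_i$, with the role of the measurement vector played by $M^{(i)}s$. This yields an $s$-independent tensor $F_i$ of dimension at most $l_i$ together with an $s$-dependent matrix $K_i(s)$ such that $B_i(M^{(i)}s)^{\otimes j}=F_iK_i(s)^{\otimes(k-j)}$. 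Composing on the right with the $s$-independent factor $(M^{(i)})^{\otimes(k-j)}$ produces $A_is^{\otimes j}=F_i\bigl(K_i(s)M^{(i)}\bigr)^{\otimes(k-j)}$, which exhibits $\arank(A_is^{\otimes j})\le l_i$.

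Finally, the subadditivity bound (Eq.~(\ref{eq:arank_additivity})) applied to $As^{\otimes j}=\sum_iA_is^{\otimes j}$ gives $\arank(As^{\otimes j})\le\sum_i l_i=\brank(A)$. The one subtle step I anticipate is justifying the use of Theorem~\ref{thm:monomial_arank} with $M^{(i)}s$ in place of the raw measurement vector $s$; this is legitimate because the diagonal gadget $D_j$ in that theorem's proof is built purely from scalar components of its input vector, and substituting fixed linear combinations of $s$ preserves the $s$-independence of the resulting $F_i$. The positivity hypothesis needed to construct $D_j$ is granted by the convention adopted in Sec.~\ref{sec:sign_problem}.
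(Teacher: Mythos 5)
Your proof is correct and follows essentially the same route the paper intends: Chow-decompose the $s$-independent tensor $A$ into terms $A_i=B_i(M^{(i)})^{\otimes k}$, apply Theorem~\ref{thm:monomial_arank} to each monomial core $B_i$ with the composed vector $M^{(i)}s$, and combine the pieces via the composition and subadditivity properties of Theorem~\ref{thm:arank_convex}. Your explicit treatment of the substitution $s\mapsto M^{(i)}s$ and the positivity caveat from Sec.~\ref{sec:sign_problem} is a welcome elaboration of what the paper's one-line proof leaves implicit.
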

\begin{proof}
    Follows from Thms.~\ref{thm:arank_convex} and \ref{thm:monomial_arank}.
\end{proof}

When $f_j$ is taken to be a homogeneous polynomial of order $N-j+1$, from Eq.~(\ref{eq:poly_reduction}), Thm.~\ref{thm:arank_convex} and Col.~\ref{col:chow_decomp}, we have
\begin{align}
\arank(V^{(j+1,k)})\leq\arank(V^{(j,k)})+\brank(f_{j+1}),\end{align}
and thus
\begin{align}
\arank(V^{(j,N_j)})&\leq\sum_{i=1}^{j}\brank(f_{i}).\label{eq:strategy2_number}
\end{align}


\subsection{Decomposition into single-mode gates}\label{sec:waring}
A complementary approach to our method is to decompose the multi-mode gate into many single-mode gates \cite{cpg_gkp,exact_decomposition}. However, essentially this can be included in our measurement-based scheme without changing the non-Gaussian ancillary modes, in the following fashion.

\begin{theorem}\label{thm:waring}
If $Vx^{\otimes N}$ has a decomposition (called \textit{Waring decomposition} \cite{waring})
\begin{align}
    Vx^{\otimes N}=\sum_{i=1}^r \qty(\sum_k m^{(i)}_{k}x_k)^{N},
\end{align}
the measurement of $m(V)$ can be implemented using $r(N-2)$ non-Gaussian ancillary modes. ($r$ is called \textit{Waring rank} and denoted as $\wrank(V)$.)
\end{theorem}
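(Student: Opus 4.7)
The plan is to exploit the fact that each Waring summand $V_i(\bm{x})=\bigl(\sum_k m^{(i)}_k x_k\bigr)^{N}$ is effectively a single-variable polynomial in the linear quadrature $y_i:=\sum_k m^{(i)}_k x_k$. Since all $V_i$ are polynomials in the commuting operators $x_1,\dots,x_n$, they commute, so $e^{iV}=\prod_{i=1}^r e^{iV_i}$ and the task reduces to implementing $r$ essentially single-mode $N$-th order gates in sequence, each one acting on a linear quadrature of the multi-mode input. For one such single-mode gate, Ref.~\cite{general_npg} (equivalently, Sec.~\ref{sec:higher_order} specialized to one mode) gives an implementation using $N-2$ non-Gaussian ancillary modes, and the target bound $r(N-2)$ is simply the sum over the $r$ summands.

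To realize this inside the measurement-based framework, I would build the reduction chain of Eq.~(\ref{eq:star_chain}) as the concatenation of $r$ blocks of length $N-2$, where block $i$ is devoted to eliminating the $V_i$-contribution. In each block, the coupling matrix $K^{(i)}_1$ is taken to be a scalar multiple of the $1\times n$ matrix $m^{(i)T}$, while the subsequent $K^{(i)}_j$ for $j>1$ are $1\times 1$ scalars, and the ancillary functions $f^{(i)}_j$ are single-variable polynomials of degrees $N,N-1,\dots,3$. Because each $f^{(i)}_j$ is one-dimensional it corresponds to a single-mode non-Gaussian ancillary state, so the total ancilla count is exactly $r(N-2)$.

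The key point making the blocks independent is that every $\star_{K^{(i)}_j,s}$ operation with $K^{(i)}_j\propto m^{(i)T}$ and single-variable $f^{(i)}_j$ only adds a polynomial in $y_i$ to the current Hamiltonian, whose contribution at each tensorial degree $k$ is a rank-one symmetric tensor aligned with $m^{(i)\otimes k}$. Hence blocks for distinct Waring summands act on disjoint rank-one components of $V$. Inside a block, the scalar factor of $K^{(i)}_j$ at step $j$ is set by the single-mode recipe (a root of the leading coefficient, in analogy with the $(Ns)^{1/(N-1)}$ choice highlighted in Sec.~\ref{sec:sign_problem}) so that the successive highest-degree coefficients of the $i$-th summand are killed according to Eq.~(\ref{eq:poly_reduction}). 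After all $r(N-2)$ steps the polynomial has degree at most $2$, Eq.~(\ref{eq:star_chain}) is satisfied, and the scheme of Fig.~\ref{fig:general_case} performs the measurement of $\bm{m}(V)$.

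The main technical nuisance — rather than a genuine obstacle — is the sign problem from Sec.~\ref{sec:sign_problem}, which reappears whenever the $(N-j)$-th root needed to absorb a measurement outcome into $K^{(i)}_j$ must be real. This is handled exactly as in that section, either by postselecting on positive outcomes or by duplicating ancillas of both signs, which the paper already argues only inflates the count by a multiplicative constant that can be folded into the stated bound. Everything else is bookkeeping on top of the single-mode result of Ref.~\cite{general_npg}.
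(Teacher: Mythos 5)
Your proposal is correct and rests on the same key observation as the paper's proof: a Waring decomposition turns $V$ into a sum of single-variable polynomials in the linear quadratures $y_i=\sum_k m^{(i)}_k x_k$, and each summand can be order-reduced by the single-mode recipe of Ref.~\cite{general_npg} because every $\star$-step whose coupling row is proportional to $m^{(i)T}$ only adds rank-one symmetric tensors aligned with $m^{(i)\otimes k}$. Where you differ is the scheduling: you run $r$ blocks of $N-2$ single-mode steps sequentially, so the chain has length $r(N-2)$ and the intermediate polynomial does not have strictly decreasing degree (block $i$ leaves the untouched summands at degree $N$); the paper instead runs $N-2$ steps in parallel across all summands, taking $f_j(x)=\sum_{i=1}^r x_i^{N-j+1}$ (an $r$-mode separable ancilla of quadrature phase states) and $K_j=D_jM$ with $M_{ik}=m^{(i)}_k$ and $D_j$ diagonal with entries $d_{j,i}$ fixed by the accumulated coefficients $c_i(s_i)$, so that each step kills the entire leading coefficient and Eq.~(\ref{eq:cond_order_reduction}) holds verbatim. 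Both instantiations consume the identical physical resource ($r(N-2)$ single-mode quadrature phase states) and both satisfy Eq.~(\ref{eq:star_chain}), which is all that Theorem~\ref{thm:diamond_star_seq} and the scheme of Fig.~\ref{fig:general_case} require; the paper's parallel version yields a shorter chain and fits the generic top-degree-reduction template, while yours makes the reduction to the known single-mode result more transparent. Two small corrections: the couplings $K^{(i)}_j$ for $j>1$ cannot be $1\times 1$ scalars --- in the $\star$-chain every $K_j$ acts on the original $n$-dimensional $\bm{x}$, so all of them must be $1\times n$ row vectors proportional to $m^{(i)T}$ with step-dependent scalar prefactors; and the opening remark that $e^{iV}=\prod_i e^{iV_i}$ is only motivation, since the construction must be carried out at the level of $\bm{m}(V)$ and the $\star$-chain, which your later paragraphs do correctly.
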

\begin{proof}
Let $M$ be a matrix whose components are
\begin{align}
    M_{ik}=m^{(i)}_{k}.
\end{align}
Then when one takes
\begin{align}
    f_j(x)=\sum_{i=1}^r x_i^{N-j+1}\label{eq:waring_f}
\end{align}
and
\begin{align}
    K_j=D_j M,\label{eq:waring_K}
\end{align}
where $D_j=\mathrm{diag}(d_{j,1},\dots,d_{j,N})$ is a diagonal matrix, then $V^{(j,N_j)}$ also has a form
\begin{align}
    V^{(j,N_j)}x^{\otimes N_j}=\sum_{i=1}^r c_i(s_i) \qty(\sum_k m^{(i)}_{k}x_k)^{N-j}
\end{align}
where $c_i(s_i)$ is a coefficient only depending on $s_i$. Thus, by setting $d_{j+1,i}=(-c_i(s_i))^{\frac{1}{N-j}}$, the condition Eq.~(\ref{eq:cond_order_reduction}) is satisfied.
\end{proof}

Note that in our implementation, the number of necessary ancillary squeezed states is given by the number of the input modes $n$, where the usual decomposition into single-mode gates \cite{cpg_gkp} requires the same number of squeezed states as the number of the gates, as we also mentioned in Sec.~\ref{sec:3rd_order_examples} for specific examples. Thus, it can be more resource-efficient in cases where the Waring rank is larger than the number of the modes. Indeed, this is the case for all examples in Secs.~\ref{sec:3rd_order_examples} and \ref{sec:general_examples}.

\subsection{Strategies for minimizing the number of ancillary modes}\label{sec:strategies}
Although the minimum number of ancillary modes is still an open problem, based on the theorems proven in the preceding sections, we propose three strategies for minimizing the number of ancillary modes.
\begin{description}
    \item[Strategy I] Based on Theorem \ref{thm:chow_decomp_arank}, $f_{j}$ and $K_{j}$ are chosen by performing Chow decomposition of $V^{(j-1,N_{j-1})}$.
    \item[Strategy II] Based on Corollary \ref{col:chow_decomp}, $f_{j}$ and $K_{j}$ are chosen by performing Chow decomposition of $f_{k}$ for all $k<j$.
    \item[Strategy III] Based on Theorem \ref{thm:waring}, $f_{j}$ and $K_{j}$ are chosen by performing Waring decomposition of $V$.
\end{description}
In general, the best strategy depends on the problem to consider, as we will see in Sec.~\ref{sec:general_examples}. Below we give more details for each strategy.
\subsubsection{Strategy I}
From the construction in the proof of Thm.~\ref{thm:chow_decomp_arank}, $f_k,K_k$ are determined as
\begin{align}
    f_1&=-V, K_1=I,\\
    f_k&=B(-V^{(k-1,N-k+1)}),\\
    K_k&=M(-V^{(k-1,N-k+1)}),
\end{align}
using $B(A(s)),M(A(s))$ in Eqs.~(\ref{eq:def_b}) and (\ref{eq:def_M}). The number of the ancillary modes is given by
\begin{align}
    n+\sum_{k=1}^{N-3}\brank(V^{(k,N-k)}).\label{eq:total_ancilla_st1}
\end{align}
\subsubsection{Strategy II}
From the construction in the proofs of Thm.~\ref{thm:monomial_arank} and Col.~\ref{col:chow_decomp}, $f_k$ is chosen as
\begin{align}
    f_1&=-V,\label{eq:f_strategy2_1}\\
    f_{k+1}&=-\bigoplus_{j=1}^k B(f_j)1^{k-j+1}\label{eq:f_strategy2}\\
    &=-(f_k\oplus B(f_{k}))1,
\end{align}
using $B(A(s))$ defined in Eq.~(\ref{eq:def_b}). 
$K_k$ is also determined as
\begin{align}
    K_1&=I,\\
    K_{i+1}&=\bigoplus_{k=1}^{i}\mqty(N_j\\k)^{1/k}D_{N-k}\qty(M(f_k)s_k)M(f_k) K_k,
\end{align}
using $D_j(s_k)$ in Eq.~(\ref{eq:def_d}) and $M(A(s))$ in Eq.~(\ref{eq:def_M}). From Eq.~(\ref{eq:strategy2_number}), the total number of the ancillary modes is given by
\begin{align}
    N+\sum_{k=1}^{N-3}\sum_{i=1}^{k}\brank(f_{i}).\label{eq:total_ancilla_st2}
\end{align}

\subsubsection{Strategy III}
From the construction in the proof of Theorem \ref{thm:waring}, $f_k$ is given by Eq.~(\ref{eq:waring_f}), and $K_i$ is given by Eq.~(\ref{eq:waring_K}). The corresponding ancillary states are separable, and consist of
\begin{align}
    r_k=\sum_{i=k}^N\wrank(V^{(i)})
\end{align}
modes of $k$th-order quadrature phase states $\ket{p-kx^{k-1}=0}$ for $3\leq k\leq N$. Thus, the total number of the ancillary modes is given by
\begin{align}
    \sum_{i=3}^N (i-2)\wrank(V^{(i)}).
\end{align}
In particular, when $V$ is a homogeneous polynomial of order $N$, it is simply
\begin{align}
    (N-2)\wrank(V^{(N)}).\label{eq:total_ancilla_st3}
\end{align}
\section{Examples of higher-order gates}\label{sec:general_examples}
Here we give some examples of higher-order non-Gaussian gates and their implementations.
\subsection{A small example}\label{sec:small_example}
In order to get an intuition about how the number of non-Gaussian ancillary modes is reduced, we first consider the following specific example,
\begin{align}
    V(\bm{x})=x_1^2x_2^2+x_1^4.\label{eq:small_example}
\end{align}
Suppose one chooses 
\begin{align}
    f_1(\bm{x})&=-x_1^2x_2^2-x_1^4,\\
    K_1&=I.
\end{align}
Then one has
\begin{align}
    V^{(1,3)}(\bm{x})=2s_1x_1x_2^2+2s_2x_1^2x_2 + 4s_1x_1^3.\label{eq:small_example_v13}
\end{align}
Now, one wants to choose $f_2,K_2$ such that
\begin{align}
    V^{(1,3)}+f_2K_2^{\otimes 3}=0.\label{eq:small_example_cond}
\end{align}
One way to do this is to decompose Eq.~(\ref{eq:small_example_v13}) as
\begin{align}
    V^{(1,3)}(\bm{x})=2s_1x_1x_2^2+x_1^2(4s_1x_1+2s_2x_2)
\end{align}
(Chow decomposition of $V^{(1,3)}$), and take
\begin{align}
    f_2(\bm{x})&=-x_1x_2^2-x_3^2x_4\\
    K_2&=\mqty(2s_1&0\\0&1\\1&0\\4s_1&2s_2).
\end{align}
This corresponds to Strategy I, and we get
\begin{align}
    \arank(V^{(1,3)})\leq 4.
\end{align}
In this case, $2+4=6$ non-Gaussian ancillary modes are required in total.

Another way is to observe that Eq.~(\ref{eq:small_example}) is Chow-decomposed into $x_1^2x_2^2$ and $x_1^4$, and take
\begin{align}
    f_2(\bm{x})&=-(2x_1x_2^2+2x_1^2x_2) - 4x_3^3.
\end{align}
Then if we take
\begin{align}
    K_2=\mqty((s_1s_2^2)^{1/3}/s_1 &0&0\\0&(s_1s_2^2)^{1/3}/s_2&0\\0&0&(s_1^3)^{1/3}/s_1)\mqty(1&0\\0&1\\1&0),
\end{align}
Eq.~(\ref{eq:small_example_cond}) holds. This corresponds to Strategy II, and we get
\begin{align}
    \arank(V^{(1,3)})\leq 3.
\end{align}
In this case, $2+3=5$ non-Gaussian ancillary modes are required in total.

One can also decompose the gate into single-mode gates (Strategy III). The gate Eq.~(\ref{eq:small_example}) can be decomposed into three $x^4$ gates, because it has a Waring decomposition
\begin{align}
    V(\bm{x})=\frac{1}{2}(x_1+6^{-1/2}x_2)^4+\frac{1}{2}(x_1-6^{-1/2}x_2)^4-\frac{1}{36}x_2^4.
\end{align}
In this case, $2\times 3=6$ non-Gaussian ancillary modes are required in total.

Therefore in this case, Strategy II is optimal in terms of the number of the non-Gaussian ancillary modes. Fig.~\ref{fig:small_example} shows the corresponding scheme for implementing the gate Eq.~(\ref{eq:small_example}).
\begin{figure}[ht]
    \centering
   \includegraphics[width=1.0\linewidth]{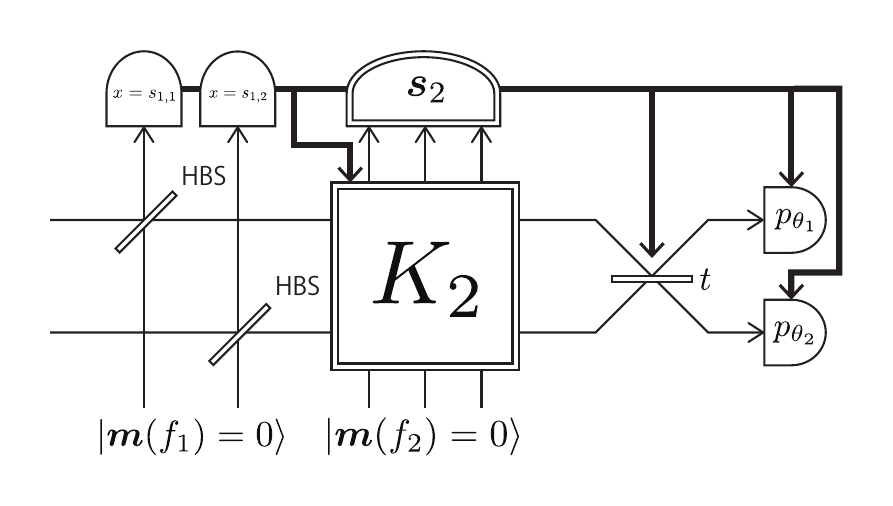}
    \caption{Implementation of the gate Eq.~(\ref{eq:small_example}), when one uses Strategy II.}
    \label{fig:small_example}
\end{figure}
\subsection{Controlled-phase gate}
We consider the following controlled phase gate,
\begin{align}
V(x_1,x_2)=x_1x_2^{N-1}.\label{eq:c_phase}
\end{align}
First, we apply Strategy I. It is straightforward to see that
\begin{align}
    B(V^{(i,N-i+1)})x^{\otimes N-i+1}=x_1x_2^{N-i}.
\end{align}
Thus, we can take 
\begin{align}
    f_i=x_1x_2^{N-i}.
\end{align}
This implementation requires only
\begin{align}
    2(N-2)\label{eq:cphase_strategy1}
\end{align}
ancillary modes. It is better than applying Strategy II.

On the other hand, if we apply Strategy III, because
\begin{align}
\wrank(V)=N
\end{align}
\cite{waring}, the decomposition into single-mode gates requires 
\begin{align}
N(N-2)\label{eq:cphase_strategy3}
\end{align}
ancillary modes. Therefore, our scheme requires a smaller number of non-Gaussian ancillary modes (Eq.~(\ref{eq:cphase_strategy3})) compared to the conventional decomposition into single-mode gates (Eq.~(\ref{eq:cphase_strategy3})).
\subsection{$C^NZ$ gate}
We consider the following $C^NZ$ gate,
\begin{align}
V_N(x_1,\dots,x_N)=x_1\dots x_{N}.\label{eq:cnz_gate}
\end{align}
We first apply Strategy I. It can be inductively shown that $V_N^{(i,N_i)}$ has the form
\begin{align}
    V_N^{(i,N_i)}x^{\otimes N_i}=\sum_{S\subset \{1,\dots,N\},|S|=N_i}a_S\prod_{j\in S}x_j\label{eq:cnz_coeffs}
\end{align}
The number of the ancillary modes is obtained by calculating the b-rank of this tensor and using Eq.~(\ref{eq:total_ancilla_st1}). Although in general it is difficult to find the minimal Chow decomposition of a polynomial, we conjecture the form of b-rank of $V^{(i,N-i+1)}$ in App.~\ref{sec:chow_decomp_cnz}. 

When we apply Strategy II, it can be inductively shown that $B(f_i)$ has the form
\begin{align}
    B(f_i) = \bigoplus_{k=1}^{\crank(f_i)}V_{N-i+1}.
\end{align}
Thus, we have
\begin{align}
    \brank(f_i)=(N-i+1)\crank(f_i),
\end{align}
and from Eqs.~(\ref{eq:f_strategy2_1}) and (\ref{eq:f_strategy2}),
\begin{align}
    \crank(f_1)&=1,\\
    \crank(f_{i+1})&=\sum_{k=1}^i c_{N-k+1,N-i}\crank(f_k),
\end{align}
where we define
\begin{align}
    c_{n,k}&=\crank(V_n 1^{\otimes n-k})\\
    &=\crank\qty(\sum_{S\subset \{1,\dots,n\},|S|=k}\prod_{j\in S}x_j).
\end{align}
In App.~\ref{sec:chow_decomp_cnz}, we conjecture an analytical form of $c_{n,k}$. Once one gets $\brank(f_i)$, the number of non-Gaussian ancillary modes can be calculated using Eq.~(\ref{eq:total_ancilla_st2}).

On the other hand, when we apply Strategy III, because 
\begin{align}
    \wrank(V_N)=2^{N-1}
\end{align}
\cite{waring}, the number of necessary ancillary modes is given by
\begin{align}
    (N-2)2^{N-1}
\end{align}
from Eq.~(\ref{eq:total_ancilla_st3}).

Table \ref{tab:cnz_ancilla} shows the comparison between Strategies I, II, and III. Calculation of each number is based on the Chow decompositions of the polynomials described in App.~\ref{sec:chow_decomp_cnz}. Strategy I gives the minimum number of the ancillary modes for $N=4,5$, whereas Strategy II is better for larger $N$. Strategy II has an advantage over Strategy III for any $N$, which corresponds to the conventional single-mode decomposition. 

\begin{table}[ht]
    \centering
    \caption{Comparison of the number of necessary non-Gaussian ancillary modes between different strategies for choosing the ancillary states, for the $C^NZ$ gate (Eq.~(\ref{eq:cnz_gate})). Calculation of each number is based on the Chow decompositions of the polynomials described in App.~\ref{sec:chow_decomp_cnz}.}
    \begin{tabular}{c|cccccc}
       $N$& 3&4&5&6&7&8\\
       \hline
        Strategy I &3&8 & 27 & 114 & 639&3936\\
       \hline
        Strategy II &3&10 & 29 & 67 & 155&333\\
        \hline
        Strategy III (conventional \cite{exact_decomposition,cpg_gkp,waring}) &4&16&48&128&320&768
    \end{tabular}
    \label{tab:cnz_ancilla}
\end{table}
\section{Conclusion \& discussion}
We have proposed a methodology including an experimentally accessible toolbox to implement, in principle, arbitrary multi-mode high-order non-Gaussian gates, relying on the concept of measurement-based quantum gates. For the 3rd-order cases, we have introduced a generalized linear coupling as a generalization of the technique used in the implementation of a CPG, which includes degrees of freedom that allow virtually applying adaptive Gaussian operations to the ancillary state. For the higher-order cases, we have proposed an implementation based on cascaded generalized linear couplings and feedforwards. We have also proposed a heuristic algorithm to reduce the number of non-Gaussian ancillary modes, based on Chow decomposition of polynomials. Our scheme does not require adaptive preparation of non-Gaussian states depending on previous measurement outcomes, and it requires only offline preparation of fixed non-Gaussian states, together with adaptive linear optics, unlike previous proposals \cite{multimode_coupling}.

We applied our method to some important examples, namely the cubic-QND gate, the CV Toffoli gate, the controlled-phase gate and the $C^nZ$ gate. In all cases, we observe that our method requires a smaller number of ancillary modes compared to conventional methods that decompose the gates into multiple single-mode gates \cite{cpg_gkp}. For higher-order cases, we found that different strategies for reducing the number of ancillary modes lead to different performances, and the best strategy to adopt depends on the types of gates. Thus, though general and systematic, our approach provides sufficient degrees of freedom for further optimization by refining the algorithms.

Our results enable a more resource-efficient and experimentally feasible implementation of CV gates compared to conventional schemes. This will accelerate the progress toward fault-tolerant universal quantum information processing, especially with light, and it highlights the inherent computational potential that CV quantum systems have. As a future extension of our work, methods for generating the multi-mode non-Gaussian ancillary states needed for our scheme could be explored, potentially through optimization of Fock-basis coefficients, which has been discussed for the case of the cubic QND gate \cite{multimode_coupling} and experimentally demonstrated for the CPG \cite{sakaguchi_cpg}.
\begin{acknowledgments}
This work was partly supported by JST [Moonshot R\&D][Grant No.~JPMJMS2064], JSPS KAKENHI (Grant No.~23KJ0498), UTokyo Foundation, and donations from Nichia Corporation. PvL acknowledges funding from the BMBF in Germany (QR.X, PhotonQ, QuKuK, QuaPhySI), from the EU’s HORIZON Research and Innovation Actions (CLUSTEC), and from the Deutsche Forschungsgemeinschaft
(DFG, German Research Foundation) - Project-ID 429529648 -
TRR 306 QuCoLiMa (“Quantum Cooperativity of Light and Matter”).
\end{acknowledgments}
\bibliography{main.bib}
\appendix
\begin{widetext}
    
\section{Decomposition of arbitrary gates into quadrature gates}\label{sec:decomp_quad_appendix}
In this section, we explicitly give a decomposition of an arbitrary Hamiltonian
\begin{align}
H(x_1,\dots,x_n)=\sum_i \qty(c_i\prod_k x_k^{m_{ik}}p_k^{n_{ik}}+c_i^*\prod_k p_k^{n_{ik}}x_k^{m_{ik}})\label{eq:arbitrary_hamiltonian}
\end{align}
into quadrature gates
\begin{align}
    \prod_k x_k^{m_{k}},\label{eq:quad_gate}
\end{align}
using Trotter-Suzuki approximation \cite{trotter_suzuki,decomp_gate}.
The goal here is to express the Hamiltonian Eq.~(\ref{eq:arbitrary_hamiltonian}) using sum (splitting) and commutators of quadrature gates Eq.~(\ref{eq:quad_gate}). Because Eq.~(\ref{eq:arbitrary_hamiltonian}) can be rewritten as
\begin{align}
    H(x_1,\dots,x_n)=\sum_i \qty(\mathrm{Re}(c_i)\Bigl\{\prod_k x_k^{m_{ik}},\prod_{k'} p_{k'}^{n_{i{k'}}}\Bigr\}+i\mathrm{Im}(c_i)\Bigl[\prod_k x_k^{m_{ik}},\prod_{k'} p_{k'}^{n_{i{k'}}}\Bigr]),
\end{align}
where $\{\cdot\}$ is an anti-commutator and $[\cdot]$ is a commutator, it suffices to give a decomposition of the term
\begin{align}
    \Bigl\{\prod_k x_k^{m_{ik}},\prod_{k'} p_{k'}^{n_{i{k'}}}\Bigr\}.
\end{align}

For doing this, we generalize the following single-mode result in Ref.~\cite{decomp_gate},
\begin{align}
\{x^M,p^N\}=-\frac{2i}{(N+1)(M+1)}[x^{M+1},p^{N+1}]-\frac{1}{N+1}\sum_{k=1}^{N-1}[p^{N-k},[x^M,p^k]],
\end{align}
to the multi-mode case. Note that here we rewrite the original equation where they use the convention $[x,p]=i/2$, with our convention $[x,p]=i$. We write
\begin{align}
\bm{x}=(x_1,\dots,x_n), \bm{p}=(p_1,\dots,p_n),
\end{align}
and introduce a vector exponent notation: for $\bm{M}=(M_1,\dots,M_n)$,
\begin{align}
\bm{x}^{\bm{M}}=x_1^{M_1}\dots x_n^{M_n}.
\end{align}
Using the equation
\begin{align}
[x^M,p^N]=iM\sum_{k=0}^{N-1}p^{k}x^{M-1}p^{N-k-1}
\end{align}
 \cite{decomp_gate}, we have
\begin{align}
    [\bm{x}^{\bm{M}},\bm{p}^{\bm{N}}]
    &=[x_1^{M_1},p_1^{N_1}]x_2^{M_2}\dots x_n^{M_n}p_2^{N_2}\dots p_n^{N_n}+\dots+p_1^{N_1}\dots p_{n-1}^{N_{n-1}}x_1^{M_1}\dots x_{n-1}^{M_{n-1}}[x_n^{M_n},p_n^{N_n}]\\
    &=\sum_{j=1}^n\bm{p}^{\bm{N}_{<j}}[x_j^{M_j},p_j^{N_j}]\bm{x}^{\bm{M}_{/j}}\bm{p}^{\bm{N}_{>j}}\\
    &=\sum_{j=1}^n iM_j\sum_{k=0}^{N_j-1}\bm{p}^{\bm{N}_{<j}}p_j^{k}\bm{x}^{\bm{M}_{/j}}x_j^{M_j-1}p_j^{N_j-k-1}\bm{p}^{\bm{N}_{>j}}\\
    &=\sum_{j=1}^n iM_j\sum_{k=0}^{N_j-1}\bm{p}^{\bm{N}_{<j}+k\bm{e}_j}\bm{x}^{\bm{M}-\bm{e}_j}\bm{p}^{\bm{N}_{>j}+(N_j-k-1)\bm{e}_j}\\
    &=\frac{i}{2}\sum_{j=1}^n M_j\sum_{k=0}^{N_j-1}\qty(\bm{p}^{\bm{N}_{<j}+k\bm{e}_j}\bm{x}^{\bm{M}-\bm{e}_j}\bm{p}^{\bm{N}_{>j}+(N_j-k-1)\bm{e}_j}+\mathrm{H.c.})\\
    &=\sum_{j=1}^n \qty(\frac{iM_jN_j}{2}\qty{\bm{x}^{\bm{M}-\bm{e}_j},\bm{p}^{\bm{N}-\bm{e}_j}}+\frac{iM_j}{2}\sum_{k=0}^{N_j-1}\qty[\bm{p}^{\bm{N}_{>j}+(N_j-k-1)\bm{e}_j},\qty[\bm{x}^{\bm{M}-\bm{e}_j},\bm{p}^{\bm{N}_{<j}+k\bm{e}_j}]])
\end{align}
Here $\bm{e}_j$ is a vector whose $j$-th component is 1 and the others are 0. $\bm{N}_{<j}$ is a vector whose $1,\dots,j-1$-th components are the same as $\bm{N}$ and the others are 0. $\bm{N}_{/j}$ is a vector whose $j$-th component is 0 and the others are the same as $\bm{N}$.

Thus, the decomposition of $\qty{\bm{x}^{\bm{M}},\bm{p}^{\bm{N}}}$ can be calculated as
\begin{align}
\begin{split}
\qty{\bm{x}^{\bm{M}},\bm{p}^{\bm{N}}}=&-\frac{2i}{(M_1+1)(N_1+1)}\Bigl(\qty[\bm{x}^{\bm{M}+\bm{e}_1},\bm{p}^{\bm{N}+\bm{e}_1}]-\frac{i(M_1+1)}{2}\sum_{k=0}^{N_{1}}\qty[\bm{p}^{\bm{N}-k\bm{e}_{1}},\qty[\bm{x}^{\bm{M}},\bm{p}^{k\bm{e}_{1}}]]\\
&-\sum_{j=2}^n\frac{iM_jN_j}{2}\qty{\bm{x}^{\bm{M}+\bm{e}_1-\bm{e}_j},\bm{p}^{\bm{N}+\bm{e}_1-\bm{e}_j}}-\sum_{j=2}^n\frac{iM_{j}}{2}\sum_{k=0}^{N_{j}-1}\qty[\bm{p}^{\bm{N}_{>j}+(N_{j}-k-1)\bm{e}_{j}},\qty[\bm{x}^{\bm{M}+\bm{e}_1-\bm{e}_{j}},\bm{p}^{\bm{N}_{<j}+\bm{e}_1+k\bm{e}_{j}}]]\Bigr)
\end{split}\label{eq:quad_decomp}
\end{align}
The anticommutator in the third term can be recursively decomposed using the same equation. This recursion stops after a finite number of steps, because the exponents $\bm{M}+\bm{e}_1-\bm{e}_j, \bm{N}+\bm{e}_1-\bm{e}_j$ have the same sum of the components as $\bm{M},\bm{N}$, while $M_1,N_1$ keep increasing by one for each step. Note that, in the final expression after the recursive application of Eq.~(\ref{eq:quad_decomp}), some of the terms in the sum could be combined because the same exponent may appear multiple times.

Here we give some examples.
\begin{align}
\begin{split}
&\{x_1x_2,p_1p_2\}\\&=- \frac{i}{2}[x_1^2x_2,p_1^2p_2]- \frac{2 i}{9}[x_1^3,p_1^3]+\frac{1}{2}[p_2,[x_1x_2,p_1]]+\frac{1}{3}[p_1,[x_1^2,p_1]]\\
&=- \frac{i}{2}[x_1^2x_2,p_1^2p_2]-\frac{2 i}{9}[x_1^3,p_1^3]+\frac{7}{6},
\end{split}\label{eq:quad_decomp_example1}\\
\begin{split}
&\{x_1^3x_2,p_1^2p_2^2\}\\
&=- \frac{i}{6}[x_1^4x_2,p_1^3p_2^2]- \frac{i}{10}[x_1^5,p_1^4p_2]+\frac{1}{3}[p_1p_2^2,[x_1^3x_2,p_1]]+\frac{1}{3}[p_2^2,[x_1^3x_2,p_1^2]]\\&+\frac{1}{3}[p_2,[x_1^4,p_1^3]]+\frac{1}{4}[p_1^2p_2,[x_1^4,p_1]]+\frac{1}{4}[p_1p_2,[x_1^4,p_1^2]],
\end{split}\\
\begin{split}
&\{x_1x_2x_3,p_1p_2p_3\}\\
&=- \frac{i}{2}[x_1^2x_2x_3,p_1^2p_2p_3]- \frac{i}{18}[x_1^3x_3,p_1^3p_3]- \frac{i}{16}[x_1^4,p_1^4]- \frac{i}{18}[x_1^3x_2,p_1^3p_2]\\&+\frac{1}{2}[p_2p_3,[x_1x_2x_3,p_1]]+\frac{1}{3}[p_3,[x_1^2x_3,p_1^2]]+\frac{1}{12}[p_1p_3,[x_1^2x_3,p_1]]+\frac{1}{4}[p_1^2,[x_1^3,p_1]]\\&+\frac{1}{12}[p_1p_2,[x_1^2x_2,p_1]]+\frac{1}{12}[p_2,[x_1^2x_2,p_1^2]].
\end{split}
\end{align}

Note that this decomposition usually includes less nesting of commutators compared to known decompositions into single-mode and Gaussian entangling gates \cite{seth_decomp,decomp_gate}. For example, a naive application of the method in Ref.~\cite{decomp_gate} to the Hamiltonian in Eq.~(\ref{eq:quad_decomp_example1}) leads to
\begin{align}
\{x_1x_2,p_1p_2\}
&=\frac{1}{72}[p_2^2,[p_1^2,[x_2^3,[x_1^3,p_1p_2]]+\frac{1}{2},
\end{align}
which includes a deeply nested commutator, requiring a higher number of gates for getting a certain accuracy of the gate.
Thus, our direct decomposition into multi-mode quadrature gates, combined with our measurement-based direct implementation of multi-mode quadrature gates, gives a more efficient way to implement high-order multi-mode gates.

\end{widetext}
\section{Simultaneous measurement of linear quadrature operators using linear optics}\label{sec:linear_measurement}
In this section, we show the following theorem.
\begin{theorem}\label{thm:linear_meas}
A set of $n$ linear combinations of quadrature operators
\begin{align}
\bm{q}=\bm{p}+A\bm{x},\label{eq:quad_specialcase}
\end{align}
where $A$ is a real symmetric matrix: $A_{ij}=A_{ji}$, can be simultenously measured only using beamsplitters and homodyne measurements.
\end{theorem}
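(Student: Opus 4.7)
The plan is to diagonalize the symmetric matrix $A$ by a real orthogonal transformation, implement that transformation as a passive multi-mode beamsplitter, and thereby reduce the joint measurement of $\bm{q}$ to $n$ independent single-mode homodyne measurements. Before anything else I would confirm the problem is well-posed by a direct commutator computation: $[q_i,q_j] = i(A_{ij}-A_{ji})$, which vanishes precisely by the assumed symmetry of $A$, so the components of $\bm{q}$ are jointly measurable.

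Next, by the spectral theorem for real symmetric matrices, write $A = O^T D O$ with $O\in\mathrm{O}(n)$ and $D = \mathrm{diag}(\lambda_1,\ldots,\lambda_n)$. Since $\tan:(-\pi/2,\pi/2)\to\mathbb{R}$ is a bijection, pick phases $\theta_i$ with $\lambda_i = \tan\theta_i$. The single algebraic identity driving the proof is $OA = DO$, obtained immediately from $OO^T = I$.

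Then I would apply the multi-mode beamsplitter associated with $O$ as defined in Sec.~\ref{sec:notation}, whose Heisenberg action is $\bm{x}' = O\bm{x}$ and $\bm{p}' = O\bm{p}$, giving
\begin{align}
O\bm{q} = O\bm{p} + OA\bm{x} = \bm{p}' + D\bm{x}',
\end{align}
whose $i$th entry $p'_i + \tan\theta_i\, x'_i$ equals $\sec\theta_i$ times the rotated single-mode quadrature $p^{(i)}_{\theta_i} = p'_i\cos\theta_i + x'_i\sin\theta_i$ acting on the $i$th output mode alone. These commute (they live on disjoint modes), so simultaneous homodyne detections at phases $\theta_i$ on the output modes yield the joint eigenvalues of $\cos\theta_i\,(O\bm{q})_i$; dividing each outcome by $\cos\theta_i$ and applying $O^T$ classically recovers the joint eigenvalues of $\bm{q}$ itself.

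The substantive step is the spectral decomposition, which furnishes the basis in which the measurement decouples mode-by-mode. Beyond that, the obstacle is essentially bookkeeping: one must keep the Heisenberg convention for the beamsplitter aligned so that $OA = DO$ is the identity used (rather than the less useful $AO = O^T D O^2$), and one must track the $\cos\theta_i$ rescaling arising from the mismatch between $p_{\theta_i}$ and $p + \tan\theta_i\, x$. Once those conventions are fixed, the argument is automatic, and it reproduces precisely the eigenvalue-decomposition prescription used in Eq.~(\ref{eq:evd}) of the main text.
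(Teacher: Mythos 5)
Your proof is correct and follows essentially the same route as the paper's: diagonalize $A=O^TDO$, implement $O$ as a passive multi-mode beamsplitter so that $O\bm{q}=\bm{p}'+D\bm{x}'$, perform single-mode homodyne detections at phases $\theta_i=\arctan\lambda_i$, and recover $\bm{q}$ by classical post-processing with the $\sec\theta_i$ rescaling and $O^T$. Your explicit commutator check $[q_i,q_j]=i(A_{ij}-A_{ji})$ and your care with the $\cos\theta_i$ factor are welcome additions (the paper's final post-processing formula multiplies by $\cos\theta_j$ where it should divide), but the argument is the same.
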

    
\begin{proof}
Because the matrix $A=(A_{ij})$ is symmetric, it can be diagonalized as
\begin{align}
    A= O^T D O
\end{align}
using an orthogonal matrix $O$ and a diagonal matrix
\begin{align}
    D=\mathrm{diag}(\lambda_1,\dots,\lambda_n).
\end{align}
Thus, Eq.~(\ref{eq:quad_specialcase}) can be written as
\begin{align}
    \hat{\boldsymbol{q}}&=\hat{\boldsymbol{p}}+O^T D O \hat{\boldsymbol{x}}\\
    &=O^T(O\hat{\boldsymbol{p}}+ D O \hat{\boldsymbol{x}})\\
    &=O^T(\hat{\boldsymbol{p}}'+ D \hat{\boldsymbol{x}}')\label{eq:q_measurement}
\end{align}
where we denote $\hat{\boldsymbol{p}}'=O\hat{\boldsymbol{p}}$ and $\hat{\boldsymbol{x}}'=O\hat{\boldsymbol{x}}$.

Based on Eq.~(\ref{eq:q_measurement}), the protocol to measure $q_i$ is the following. First, a multi-mode beamsplitter corresponding to $O$ is applied, then homodyne measurements of operators
\begin{align}
\hat{p}_{\theta_i}=\hat{p}_i\cos\theta_i+\hat{x}_i\sin\theta_i
\end{align}
are performed on each mode. The phases $\theta_i$ are determined as
\begin{align}
    \theta_i=\mathrm{arctan}(\lambda_i).
\end{align}
Finally, after obtaining the homodyne measurement outcomes
\begin{align}
    \hat{p}_{\theta_i}=y_i,
\end{align}
values of $\hat{q}_i$ can be obtained by classical post-processing:
\begin{align}
    q_i=\sum_j O^T_{ij} y_j \cos\theta_j.
\end{align}
\end{proof}

As a generalization of the Theorem \ref{thm:linear_meas}, we have the following theorem.
\begin{theorem}\label{thm:gen_linear_meas}
A set of $n$ commuting linear combinations of quadrature operators
\begin{align}
\bm{q}=B\bm{p}+C\bm{x},\label{eq:q_BC}
\end{align}
can be simultenously measured only using beamsplitters and homodyne measurements.
\end{theorem}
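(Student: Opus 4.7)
The approach is to reduce Theorem~\ref{thm:gen_linear_meas} to Theorem~\ref{thm:linear_meas} via a singular-value decomposition of $B$, using the commutation hypothesis to control the rank-deficient directions. From $[\hat p_a,\hat x_b]=-i\delta_{ab}$ one first notes that $[q_i,q_j]=0$ is equivalent to the matrix identity $BC^T=CB^T$ (i.e., $BC^T$ symmetric). In the generic case where $B$ is invertible, this already closes the argument: write $\bm{q}=B(\bm{p}+B^{-1}C\bm{x})$, check that $B^{-1}C$ is symmetric by left- and right-multiplying $BC^T=CB^T$ by $B^{-1}$ and $(B^T)^{-1}$, apply Theorem~\ref{thm:linear_meas} to $\bm{p}+B^{-1}C\bm{x}$, and recombine the outcomes classically by $B$.

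For the general case I would take an SVD $B=U\Sigma V^T$ with $U,V$ orthogonal and $\Sigma\geq 0$ diagonal. Applying the multi-mode beamsplitter $V^T$ to the input modes yields new quadratures $\bm{x}'':=V^T\bm{x}$, $\bm{p}'':=V^T\bm{p}$; since $U$ is orthogonal, measuring $\bm{q}$ is equivalent to measuring $\bm{q}':=U^T\bm{q}=\Sigma\bm{p}''+C''\bm{x}''$ with $C'':=U^TCV$ and then recovering $\bm{q}=U\bm{q}'$ classically. The commutation condition becomes $\Sigma(C'')^T=C''\Sigma$. Partition the indices into $S=\{i:\Sigma_{ii}\neq 0\}$ and $T=\{i:\Sigma_{ii}=0\}$. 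Taking the $(i,j)$-entry of this identity for $i\in T$, $j\in S$ gives $C''_{ij}\Sigma_{jj}=0$, so the $T$--$S$ block of $C''$ vanishes; and for $i,j\in S$ the same identity yields $C''_{ij}/\Sigma_{ii}=C''_{ji}/\Sigma_{jj}$, i.e., $A_S:=\Sigma_S^{-1}C''_{SS}$ is symmetric.

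These two structural facts split the measurement into commuting pieces on disjoint sets of modes. On the $T$-modes, direct $x$-homodyne measures $\{x''_j\}_{j\in T}$, from which $q'_i=\sum_{j\in T}C''_{ij}x''_j$ for $i\in T$ is obtained by a classical linear combination (using $C''_{TS}=0$). On the $S$-modes, the $|S|$ commuting observables $\{p''_i+(A_S\bm{x}''_S)_i\}_{i\in S}$ fall directly within Theorem~\ref{thm:linear_meas}, and the full $q'_i$ for $i\in S$ is then recovered classically as $\Sigma_{ii}\bigl[p''_i+(A_S\bm{x}''_S)_i\bigr]+\sum_{j\in T}C''_{ij}x''_j$. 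All measured operators act on disjoint modes and commute, so they can be executed in parallel with fixed beamsplitters and homodyne phases; the final answer is $\bm{q}=U\bm{q}'$. The main obstacle is precisely the rank-deficient case: $B^{-1}$ does not exist, and the $S$-components carry a coupling to $\bm{x}''_T$ that cannot be absorbed by linear optics on the $S$-modes alone. The commutation identity $\Sigma(C'')^T=C''\Sigma$ is what closes the gap, simultaneously forcing $C''_{TS}=0$ (so the $T$-measurement is a pure $\bm{x}''_T$ readout whose outcomes cancel the cross-terms classically) and $A_S$ symmetric (so the remaining $S$-block is of the exact form already handled by Theorem~\ref{thm:linear_meas}).
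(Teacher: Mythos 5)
Your proof is correct, but it takes a genuinely different route from the paper's. The paper passes to the complex representation $\hat q_i=\sum_j(A_{ij}\hat a_j+A_{ij}^*\hat a_j^\dagger)$ with $A=\tfrac{1}{\sqrt2}(C-iB)$, extracts from commutativity that $AA^\dagger$ is real symmetric, and then factors $A=O'D_1O''\Phi O$ using a singular value decomposition together with the decomposition of an arbitrary unitary into orthogonal--diagonal--orthogonal factors; the circuit (passive network $O$, homodyne phases $\phi_i$, classical postprocessing by the real matrix $P=O'D_1O''$) is read off directly, with no case analysis and no appeal to Theorem~\ref{thm:linear_meas}. You instead stay in the real $(B,C)$ picture, characterize commutativity as $BC^T=CB^T$, take an SVD of $B$ alone, and reduce to Theorem~\ref{thm:linear_meas} on the modes where $\Sigma$ is nonsingular while handling $\ker B$ by direct $x$-homodyne; the identity $\Sigma(C'')^T=C''\Sigma$ correctly forces both $C''_{TS}=0$ and the symmetry of $\Sigma_S^{-1}C''_{SS}$, so the block splitting is sound and the cross-terms $\sum_{j\in T}C''_{ij}x''_j$ are recovered classically. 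Your argument is more elementary (only real linear algebra and the already-proven Theorem~\ref{thm:linear_meas}) and makes the physics of the rank-deficient directions of $B$ transparent, at the cost of a case split; the paper's argument is uniform and yields the measurement circuit in one step, at the cost of invoking the unitary decomposition of Ref.~\cite{unitary_decomp}. Operationally both constructions compose into the same resources: a fixed passive beamsplitter network, single-mode homodyne detections, and classical postprocessing.
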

\begin{proof}
Equation (\ref{eq:q_BC}) can be rewritten as
\begin{align}
    \hat{q}_i=\sum_{j} (A_{ij}\hat{a}_j+ A^*_{ij}\hat{a}_j^\dagger),\label{eq:q_A}
\end{align}
where $A_{ij}=\frac{1}{\sqrt{2}}(C_{ij}-iB_{ij})$. From the condition that all $\hat{q}_i$ commute, the matrix $A=(A_{ij})$ should satisfy
\begin{align}
    (AA^\dagger)^T=AA^\dagger.
\end{align}
As $AA^\dagger$ is a symmetric real matrix, it can be diagonalized as
\begin{align}
    AA^\dagger = O' D_1 O'^T
\end{align}
where $O'$ is an orthogonal matrix, and $D_1$ is a real diagonal matrix. Thus, $A$ has a singular value decompositon of the form
\begin{align}
    A=O' D_1 V \label{eq:A_decomp1}
\end{align}
where $V$ is a unitary matrix.

Now we use the fact that any unitary matrix $V$ can be decomposed as
\begin{align}
    V=O'' \Phi O\label{eq:A_decomp2}
\end{align}
where $O'',O$ are orthogonal matrices, and $\Phi$ is a diagonal matrix having complex diagonal elements $e^{i\phi_i}$ \cite{unitary_decomp}. Combining Eqs.~(\ref{eq:A_decomp1}) and (\ref{eq:A_decomp2}), we obtain
\begin{align}
    A=O' D_1 O'' \Phi O.\label{eq:A_decomp3}
\end{align}
Thus, by denoting the real matrix $O' D_1 O''$ as $P$, Eq.~(\ref{eq:q_A}) can be written as
\begin{align}
    \hat{\boldsymbol{q}}=P(\Phi O \hat{\boldsymbol{a}}+\Phi^* O^* \hat{\boldsymbol{a}}^\dagger)\label{eq:A_decomp_final}
\end{align}
where $\hat{\boldsymbol{q}}=(\hat{q}_i),\hat{\boldsymbol{a}}=(\hat{a}_i)$ are vectors of operators.

Equation (\ref{eq:A_decomp_final}) means the simultaneous measurement of $\hat{q}_i$ can be achieved by using passive beamsplitters corresponding $O$ followed by homodyne measurements in the phase determined by $\phi_i$. Then one can use classical post-processing of measurement outcomes corresponding to the matrix $P$ to get values of $\hat{q}_i$.
\end{proof}
    
\section{Proof of Theorem \ref{thm:diamond_relation}}\label{sec:generalized_coupling_appx}

In this section, we give a proof of Theorem \ref{thm:diamond_relation}. We consider the scheme in Fig.~\ref{fig:building_block}. We define $n\times n$ matrix $T$, $n'\times n'$ matrix $T'$, $n'\times n$ matrix $R$ as
\begin{align}
T&=\begin{cases}
    \mathrm{diag}(t_1,\dots,t_n) & \mathrm {if\ } n\leq n'\\
    \mathrm{diag}(t_1,\dots,t_{n'},1,\dots,1) & \mathrm {otherwise}
\end{cases}\\
T'&=\begin{cases}
    \mathrm{diag}(t_1,\dots,t_{n'}) & \mathrm {if\ } n'\leq n\\
    \mathrm{diag}(t_1,\dots,t_{n},1,\dots,1) & \mathrm {otherwise}
\end{cases}\\
R&=\begin{cases}
    \mathrm{diag}(r_1,\dots,r_n) & \mathrm {if\ } n\leq n'\\
    \mathrm{diag}(r_1,\dots,r_{n'}) & \mathrm {otherwise}
\end{cases}
\end{align}
Then, the relation between input and output quadratures can be written as
\begin{align}
    x_{-}&=ROx-T'O'x',\\
    p_{-}&=ROp-T'O'p',\\
    x_{+}&=O^T(TOx+R^TO'x'),\\
    p_{+}&=O^T(TOp+R^TO'p').
\end{align}
Then we have
\begin{widetext}
\begin{align}
\begin{split}
    &O^TTOm(x,p;f)+O^TR^TO'm(x',p';g)=p_{+}-\frac{\partial}{\partial x_+} \qty[f(O^T(TOx_+ + R^Tx_-))+g(O^{\prime T}(ROx_+ - T'x_-))].
\end{split}\label{eq:m_relation_orig}
\end{align}
If one measures $x_-$ and gets $O'^{T} T' x_-=s$,
\begin{align}
\begin{split}
    &f(O^T(T Ox_+ + Rx_-))+g(O^{\prime T}(ROx_+ - Tx_-))=f(P(K)x_+ + K^T s)+g\qty(K P(K) x_+ - s),
\end{split}\label{eq:K_PK_rel}
\end{align}
\end{widetext}
where we used
\begin{align}
    K&=O^{\prime T}RT^{-1}O,\\
    P(K)&=(I+K^TK)^{-1/2}\\
    &=O^{T}TO.
\end{align}
Using Eq.~(\ref{eq:K_PK_rel}), Eq.~(\ref{eq:m_relation_orig}) can be rewritten as
\begin{align}
P(K)m(x,p;f)+KP(K)m(x',p';g)=m(x_+,p_+;f\diamond_{K,s}g),
\end{align}
which is the statement of Theorem \ref{thm:diamond_relation}.

\section{Proof of Theorem \ref{thm:diamond_star_seq}}\label{sec:diamond_star_seq}
We introduce the notation
\begin{align}
    \qty[f\circ(A,b)](x)=f(Ax+b).
\end{align}
Then we have
\begin{widetext}
\begin{align}
((f \circ(A,b)) \diamond_{KA,\bm{s}} g)(\bm{x})&=f(AP(KA)\bm{x} + AA^TK^T \bm{s}+b)+g(KA P(KA)\bm{x} - \bm{s})\\
&=(f\star_{K,KAb+(I+KAA^TK^T)\bm{s}}g)\circ(A,b)\circ(P(KA),A^TK^T \bm{s}).
\end{align}
\end{widetext}
Theorem \ref{thm:diamond_star_seq} can be proven by repeating this transformation.

\section{Chow decomposition for $C^NZ$ gate}\label{sec:chow_decomp_cnz}
Here we consider the Chow decomposition of the following polynomial
\begin{align}
    V_{n,k}^{a}x^{\otimes k}=\sum_{S\subset \{1,\dots,n\},|S|=k}a_S\prod_{j\in S}x_j.\label{eq:arbitrary_polynomial}
\end{align}
We first consider the case where $a_S=1$ for all $S$:
\begin{align}
    V_{n,k}x^{\otimes k}=\sum_{S\subset \{1,\dots,n\},|S|=k}\prod_{j\in S}x_j.
\end{align}
For $n,k\in \mathbb{N}$, let
\begin{align}
    P(n,k)=\qty{(p_j)\in \mathbb{N}^k |p_j\geq 1, \sum_{j=1}^k p_j=n}
\end{align}
be a set of all partitions of $n$ into $k$ pieces. For $(p_j) \in P(n,k)$, we define a polynomial $r\qty((p_j))$ as
\begin{align}
r\qty((p_j))(\bm{x})=\prod_{j=1}^{k}\sum_{m\in R_j}x_{m},
\end{align}
where
\begin{align}
    R_1&=[1,p_1],\\
    R_j&=(\sum_{l=1}^{j-1} p_l,\sum_{l=1}^{j} p_l] &(1<j\leq k).
\end{align}
We define $Q(n,l)\subset P(n,2l+1)$ as
\begin{align}
    Q(n,l)=\qty{(p_j)\in P(n,2l+1) | p_{2m}=1, m=1,\dots l}.
\end{align}
For example,
\begin{align}
    r\qty((2,1,3))=(x_1+x_2)x_3(x_4+x_5+x_6).
\end{align}
and
\begin{align}
\begin{split}
    Q(7,2)=\{&(3,1,1,1,1),(1,1,3,1,1),(1,1,1,1,3),\\
    &(2,1,2,1,1),(2,1,1,1,2),(1,1,2,1,2)\}.
\end{split}
\end{align}
Then we have the following theorem.
\begin{theorem}\label{thm:chow_decomp_poly}
The following expressions
\begin{align}
    &\sum_{(p_j) \in Q(n,l)}r\qty((p_j)) & (k=2l+1)\label{eq:chow_decomp_odd}\\
    &\sum_{m=k}^{n}\sum_{(p_j) \in Q(m-1,l-1)}r\qty((p_j))x_{m} & (k=2l)\label{eq:chow_decomp_even}
\end{align}
give a Chow decomposition of $V_{n,k}$.
\end{theorem}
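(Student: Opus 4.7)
The plan is to prove each case by showing that each squarefree monomial of degree $k$ in $x_1,\ldots,x_n$ appears on the right-hand side exactly once. The key structural observation is that for any $(p_j)\in P(n,k)$, the blocks $R_1,\ldots,R_k$ partition $\{1,\ldots,n\}$ into consecutive disjoint intervals, so $r((p_j))$ is a product of $k$ linear forms in pairwise disjoint variables. Expanding gives a sum of squarefree monomials $x_{m_1}\cdots x_{m_k}$ indexed by choices $m_j\in R_j$, and the ordering $R_1<R_2<\cdots<R_k$ forces $m_1<m_2<\cdots<m_k$ automatically. Consequently, a sorted tuple $(i_1<\cdots<i_k)$ appears in the expansion of $r((p_j))$ if and only if $i_j\in R_j$ for every $j$, in which case it appears with coefficient one.

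For the odd case $k=2l+1$, I would fix a squarefree monomial $x_{i_1}\cdots x_{i_{2l+1}}$ with $i_1<\cdots<i_{2l+1}$ and argue that it appears in exactly one $r((p_j))$ with $(p_j)\in Q(n,l)$. Since $p_{2j}=1$ for $1\leq j\leq l$, each $R_{2j}$ is a singleton, and the requirement $i_{2j}\in R_{2j}$ forces $R_{2j}=\{i_{2j}\}$. This pins down the odd-indexed parts as $p_1=i_2-1$, $p_{2j-1}=i_{2j}-i_{2j-2}-1$ for $2\leq j\leq l$, and $p_{2l+1}=n-i_{2l}$. The strict inequalities $i_1\geq 1$, $i_{2j-2}<i_{2j}$, and $i_{2l}<i_{2l+1}\leq n$ ensure $p_q\geq 1$ for all $q$, so the recovered partition lies in $Q(n,l)$. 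The odd-indexed blocks then evaluate to $R_1=[1,i_2-1]$, $R_{2j-1}=[i_{2j-2}+1,i_{2j}-1]$ for $2\leq j\leq l$, and $R_{2l+1}=[i_{2l}+1,n]$, each of which contains the corresponding $i_{2j-1}$ by strict monotonicity. This establishes Eq.~(\ref{eq:chow_decomp_odd}).

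For the even case $k=2l$, I would group the squarefree monomials $x_{i_1}\cdots x_{i_{2l}}$ with $i_1<\cdots<i_{2l}$ by their largest index $m=i_{2l}\in\{2l,\ldots,n\}$. Dropping $x_m$ leaves a squarefree monomial of degree $2l-1$ in $x_1,\ldots,x_{m-1}$, which by the odd case applied to $V_{m-1,2l-1}$ appears exactly once in $\sum_{(p_j)\in Q(m-1,l-1)}r((p_j))$. Multiplying by $x_m$ and summing over $m$ reconstructs $V_{n,2l}$, proving Eq.~(\ref{eq:chow_decomp_even}). The main obstacle throughout is the combinatorial bookkeeping in the odd case: one must simultaneously verify that forcing $R_{2j}=\{i_{2j}\}$ produces a valid partition with positive parts summing to $n$, and that the resulting odd-indexed blocks automatically contain the odd-indexed $i$'s of the monomial. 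Once the odd case is settled, the even case reduces to it immediately via the largest-variable factorization.
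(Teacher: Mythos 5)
Your proposal is correct and follows essentially the same route as the paper's proof: for $k=2l+1$ you show each squarefree degree-$k$ monomial appears in exactly one $r((p_j))$ by noting the singleton blocks must sit at the even-indexed variables, which pins down the unique partition in $Q(n,l)$, and for $k=2l$ you factor out the largest variable and reduce to the odd case. Your write-up is in fact somewhat more careful than the paper's, since you explicitly verify that the recovered parts are positive and that the odd-indexed blocks contain the odd-indexed variables.
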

\begin{proof}
We first prove the case of $k=2l+1$. It suffices to show that for any $S\subset \{1,\dots,n\}$ such that $|S|=k$, there exists a unique $(p_j) \in Q(n,l)$ such that expansion of $r\qty((p_j))$ includes $\prod_{j\in S}x_j$. In fact, such $(p_j)$ is given by
\begin{align}
\begin{split}
    p_{1}&=s_2-1\\
    p_{2l+1}&=s_{2l+2}-s_{2l}-1\\
    p_{2l}&=1.
\end{split}
\end{align}
where $(s_j)$ is a sequence of all elements of $S$ in an ascending order $s_1\leq\dots\leq s_k$.

For the case of $k=2l$, we have
\begin{align}
    \prod_{j\in S}x_J=\qty(\prod_{j\in S\setminus \qty{s_k}}x_j) x_{s_k},
\end{align}
where $s_k$ is the maximal element of $S$. By performing the Chow decomposition to the first part, we get Eq.~(\ref{eq:chow_decomp_even}).
\end{proof}

Below we show some examples of the obtained Chow decompositions,
\begin{widetext}
\begin{align}
\begin{split}
    V_{6,3}=&x_1x_2(x_3+x_4+x_5+x_6)+(x_1+x_2)x_3(x_4+x_5+x_6)\\&+(x_1+x_2+x_3)x_4(x_5+x_6)+(x_1+x_2+x_3+x_4)x_5x_6,
\end{split}\label{eq:v63}\\
\begin{split}
    V_{6,4}=&x_1x_2x_3x_4+(x_1+x_2)x_3x_4x_5+x_1x_2(x_3+x_4)x_5+(x_1+x_2+x_3)x_4x_5x_6\\&+(x_1+x_2)x_3(x_4+x_5)x_6+x_1x_2(x_3+x_4+x_5)x_6,
\end{split}\\
\begin{split}
    V_{7,5}=&(x_1+x_2+x_3)x_4x_5x_6x_7+x_1x_2(x_3+x_4+x_5)x_6x_7+x_1x_2x_3x_4(x_5+x_6+x_7)\\&+(x_1+x_2)x_3(x_4+x_5)x_6x_7+(x_1+x_2)x_3x_4x_5(x_6+x_7)+x_1x_2(x_3+x_4)x_5(x_6+x_7).
\end{split}
\end{align}
\end{widetext}

By using binomial coefficients we can write
\begin{align}
    \qty|Q(n,l)|=\mqty(n-l-1\\l).
\end{align}
Thus, Eq.~(\ref{eq:chow_decomp_odd}) has $\mqty(n-l-1\\l)$ terms. For Eq.~(\ref{eq:chow_decomp_even}), the number of terms is
\begin{align}
    \sum_{m=k}^n\mqty(m-l-1\\l-1)&=\mqty(n-l\\l).
\end{align}
Thus, we have the following corollary.
\begin{col}
\begin{align}
&\crank(V_{n,2l}),\crank(V_{n+1,2l+1})\leq\mqty(n-l\\l),\\
&\brank(V_{n,2l})\leq2l\mqty(n-l\\l),\\
&\brank(V_{n+1,2l+1})\leq(2l+1)\mqty(n-l\\l).
\end{align}
\end{col}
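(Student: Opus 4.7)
The plan is to derive the three bounds as direct counting corollaries of the explicit Chow decomposition given in Theorem \ref{thm:chow_decomp_poly}. The Chow rank is bounded above by the number of summands in any Chow decomposition, and the b-rank bounds then follow from the relation $\brank(A)=k\cdot\crank(A)$ for a homogeneous polynomial $A$ of degree $k$, applied to the homogeneous polynomial $V_{n,k}$.

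First I would handle the odd case $k=2l+1$ by counting $|Q(n,l)|$. Since the even-indexed entries of any tuple in $Q(n,l)$ are forced to equal $1$, contributing $l$ to the total, the remaining $l+1$ odd-indexed entries form a composition of $n-l$ into positive parts. A standard stars-and-bars argument then gives $|Q(n,l)|=\mqty(n-l-1\\l)$. Substituting $n\to n+1$ and invoking Eq.~(\ref{eq:chow_decomp_odd}) yields $\crank(V_{n+1,2l+1})\leq\mqty(n-l\\l)$, and multiplying by the degree $2l+1$ gives $\brank(V_{n+1,2l+1})\leq(2l+1)\mqty(n-l\\l)$.

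For the even case $k=2l$, the decomposition in Eq.~(\ref{eq:chow_decomp_even}) is a double sum, so I would count its total number of summands as $\sum_{m=2l}^{n}|Q(m-1,l-1)|=\sum_{m=2l}^{n}\mqty(m-l-1\\l-1)$, then apply the hockey-stick identity $\sum_{j=l-1}^{n-l-1}\mqty(j\\l-1)=\mqty(n-l\\l)$ after the substitution $j=m-l-1$, giving $\crank(V_{n,2l})\leq\mqty(n-l\\l)$. Each summand in Eq.~(\ref{eq:chow_decomp_even}) is a product of $2l-1$ linear forms times the single variable $x_m$, so it has $2l$ linear factors in total, yielding $\brank(V_{n,2l})\leq 2l\mqty(n-l\\l)$. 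I do not anticipate any real obstacle here; the corollary is essentially arithmetic built on top of Theorem \ref{thm:chow_decomp_poly}, and the only point requiring care is keeping the index shift in the hockey-stick sum consistent so that the upper index of the resulting binomial comes out to $n-l$ rather than $n-l\pm 1$.
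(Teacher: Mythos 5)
Your proposal is correct and follows exactly the paper's route: count the summands of the explicit decompositions in Theorem~\ref{thm:chow_decomp_poly} via $|Q(n,l)|=\binom{n-l-1}{l}$ and the hockey-stick identity, then pass to b-rank by noting each summand is a product of $k$ linear factors. You merely spell out the stars-and-bars and index-shift details that the paper leaves implicit.
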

We conjecture that this gives the minimal Chow decomposition.
\begin{cj}
\begin{align}
&\crank(V_{n,2l}),\crank(V_{n+1,2l+1})=\mqty(n-l\\l),\\
&\brank(V_{n,2l})=2l\mqty(n-l\\l),\\
&\brank(V_{n+1,2l+1})=(2l+1)\mqty(n-l\\l).
\end{align}
\end{cj}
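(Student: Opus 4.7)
The upper bounds on $\crank(V_{n,2l})$, $\crank(V_{n+1,2l+1})$ and the corresponding $\brank$'s follow directly from Theorem~\ref{thm:chow_decomp_poly} and its corollary, so the entire task is to establish matching lower bounds: $\crank(V_{n,2l}) \ge \binom{n-l}{l}$ and $\crank(V_{n+1,2l+1}) \ge \binom{n-l}{l}$ (the $\brank$ bounds then follow from the observation that for a homogeneous polynomial of degree $N$, each term in a Chow decomposition with generic linear factors contributes exactly $N$ to the b-rank). The plan is to use apolarity theory to exhibit an obstruction forcing at least $\binom{n-\lceil k/2\rceil}{\lfloor k/2\rfloor}$ summands in any Chow decomposition of $V_{n,k}$.

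First I would set up the apolarity framework. Let the ring $T=\mathbb{R}[y_1,\dots,y_n]$ act on $\mathbb{R}[x_1,\dots,x_n]$ by contraction $y_i\mapsto\partial/\partial x_i$, and consider the catalecticant (flattening) map
\begin{align}
C_a:\mathrm{Sym}^a T\to\mathrm{Sym}^{k-a}\mathbb{R}[\bm{x}],\qquad D\mapsto D\cdot V_{n,k},
\end{align}
at the balanced parameter $a=\lfloor k/2\rfloor$. Using $\partial_{y_{i_1}}\cdots\partial_{y_{i_a}}V_{n,k}=V_{n-a,k-a}$ in the complementary variables, the image is spanned by the $\binom{n}{a}$ restrictions of elementary symmetric polynomials of degree $k-a$ to $(n-a)$-element subsets, and one checks these are linearly independent whenever $a\le k\le n-a$, so $\mathrm{rank}\,C_a=\binom{n}{a}$.

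For each Chow summand $g_i=\prod_{j=1}^{l_i}\ell_{ij}^{n_{ij}}$ in a hypothetical decomposition $V_{n,k}=\sum_{i=1}^r g_i$, the classical observation that $g_i$ depends on only $l_i$ independent linear directions gives $\mathrm{rank}\,C_a(g_i)\le\binom{a+l_i-1}{l_i-1}$. Summing over $i$ yields
\begin{align}
\binom{n}{a}\le r\cdot\binom{a+k-1}{k-1},
\end{align}
which recovers the correct asymptotic order in $n$ but misses the sharp constant $\binom{n-l}{l}$. To close the gap I would restrict $C_a$ to the $S_n$-isotypic component corresponding to the Specht module $S^{(n-l,l)}$, on which the image of $V_{n,k}$ can be shown to have dimension exactly $\binom{n-l}{l}$ via the branching rules for symmetric functions, and then bound the rank of each $C_a(g_i)$ restricted to this component by analyzing how a generic product of $l_i$ linear forms decomposes under the $S_n$-action.

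The hardest step will be this final representation-theoretic refinement. Standard catalecticant arguments are notoriously loose for highly symmetric polynomials, and extracting the exact constant $\binom{n-l}{l}$ requires controlling how each Chow summand $g_i$ — which is generically \emph{not} $S_n$-invariant — contributes to a fixed isotypic component of an invariant target. A complementary inductive strategy using $V_{n,k}=V_{n-1,k}+x_n V_{n-1,k-1}$ runs into the obstruction that setting $x_n=0$ in a Chow decomposition of $V_{n,k}$ does not in general give a Chow decomposition of $V_{n-1,k}$ of the same length, because summands whose linear factors all contain $x_n$ vanish without producing a shorter certificate for $V_{n-1,k}$. These two obstacles together are the likely reason that the statement appears as a conjecture rather than a theorem.
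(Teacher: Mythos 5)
The statement you have been asked to prove is presented in the paper only as a \emph{conjecture}: the paper establishes the upper bounds $\crank(V_{n,2l}),\crank(V_{n+1,2l+1})\leq\binom{n-l}{l}$ and the corresponding b-rank bounds via Theorem~\ref{thm:chow_decomp_poly} and its corollary, and explicitly leaves the matching lower bounds open. Your proposal correctly identifies this structure, but it does not close the gap either, and the two steps on which everything hinges are missing. First, the balanced catalecticant estimate you derive, $\binom{n}{a}\leq r\binom{a+k-1}{k-1}$ with $a=\lfloor k/2\rfloor=l$, gives the right order in $n$ but not the constant $\binom{n-l}{l}$, and the proposed repair --- restricting to the $S^{(n-l,l)}$-isotypic component and bounding the contribution of each Chow summand there --- is asserted rather than proved. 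A generic summand $\prod_j\ell_{ij}^{n_{ij}}$ is not $S_n$-equivariant, so there is no a priori control on the rank of its catalecticant projected onto a single isotypic component of the invariant target; without a per-summand bound of the form ``rank at most $1$ on this component'' the refinement yields nothing beyond the crude estimate, and you give no argument that such a bound holds. Even the preliminary claim that the image of $V_{n,k}$ in that component has dimension exactly $\binom{n-l}{l}$ is left to branching rules without verification.

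Second, your reduction of the b-rank lower bounds to the Chow-rank lower bounds is itself gapped. The b-rank counts \emph{distinct} linear factors per term, so a term such as $\ell^{k}$ contributes only $1$ to $\sum_i l_i$; a hypothetical decomposition trading more terms for fewer distinct factors each could in principle achieve $\brank(V_{n,k})<k\cdot\crank(V_{n,k})$, and ruling this out requires a separate argument that you do not supply (your parenthetical about ``generic linear factors'' assumes precisely what must be shown). Your diagnosis of why the naive induction $V_{n,k}=V_{n-1,k}+x_nV_{n-1,k-1}$ fails is sound and worth recording, but overall what you have written is a research plan with honestly flagged obstructions, not a proof; the statement remains a conjecture both in the paper and after your attempt.
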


In the case where $a_S\neq 1$ in Eq.~(\ref{eq:arbitrary_polynomial}), we can modify the Chow decomposition in Theroem \ref{thm:chow_decomp_poly} by expanding the terms so that each term only has one non-monomial factor. We define expression $r_{i}\qty((p_j))$ as a partial expansion of $r\qty((p_j))$ except for the factor corresponding to $p_i$, with additional coefficients $a_S$ to each term,
\begin{align}
   r_{i}\qty((p_j))=\sum_{(m_j)\in \prod_{j\neq i} R_j}\qty(\sum_{m \in R_i}a_{\{m_j|j\neq i\}\cup \{m\}}x_{m})\prod_{j\neq i}x_{m_j}.
\end{align}
Then $V^{a}_{n,k}$ has a Chow decomposition in the following form.
\begin{col}
The following expressions
\begin{align}
\begin{cases}
    \sum_{(p_j) \in Q(n,l)}r_{\mathrm{argmax}_i p_i}\qty((p_j)) & (k=2l+1)\\
    \sum_{m=k}^{n}\sum_{(p_j) \in Q(m-1,l-1)}r_{\mathrm{argmax}_i p_i}\qty((p_j))x_{m} & (k=2l)
\end{cases}
\end{align}
give a Chow decomposition of $V^{a}_{n,k}$.
\end{col}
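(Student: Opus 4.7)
The plan is to reduce the statement to Theorem \ref{thm:chow_decomp_poly} by observing that introducing the coefficients $a_S$ amounts to absorbing them into exactly one factor of each Chow term, and that this absorption preserves both the Chow-decomposition form and the correspondence between tuples $(p_j)$ and subsets $S$.

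First, I would recall the bijection established inside the proof of Theorem \ref{thm:chow_decomp_poly}: for $k = 2l+1$, each $S \subset \{1,\dots,n\}$ with $|S|=k$ arises from exactly one $(p_j) \in Q(n,l)$ when the product $r((p_j)) = \prod_j \sum_{m \in R_j} x_m$ is fully expanded, because $(p_j)$ is reconstructed from $S$ by the explicit formula for $p_1, p_{2l}, p_{2l+1}$ recorded there. For $k = 2l$, the same holds after separating the maximum element $s_k = m$. Thus in the $a_S \equiv 1$ case, each monomial $\prod_{j \in S} x_j$ is produced exactly once in the stated sums.

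Next, I would check that $r_i((p_j))$ is syntactically a sum of Chow-style terms. By definition, each summand factors as $\bigl(\sum_{m \in R_i} a_{\{m_j | j \neq i\} \cup \{m\}} x_m\bigr) \prod_{j \neq i} x_{m_j}$, which is a product of $k$ linear forms in $\bm x$ (the first factor is a linear combination supported on $R_i$, and the remaining $k-1$ factors are single-variable linear forms). Hence the full sum over $(p_j)$, and in the even case additionally over $m$, yields a legitimate Chow decomposition; it remains only to check that it equals $V^a_{n,k}\bm x^{\otimes k}$.

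Finally, I would expand $r_i((p_j))$ completely and compare monomial-by-monomial with the $a_S \equiv 1$ case. Full expansion yields $\sum_{(m_j) \in \prod_j R_j} a_S \prod_j x_{m_j}$, where $S = \{m_j\}_{j=1}^{k}$; the absorbed coefficient $a_S$ is precisely the one attached in the definition of $V^a_{n,k}$. By the bijection from Step~1, summing over $(p_j) \in Q(n,l)$ (and over $m$ in the even case) yields each monomial $\prod_{j \in S} x_j$ exactly once, multiplied by the correct $a_S$. I do not expect any genuine obstacle here; the argument is essentially a bookkeeping refinement of Theorem \ref{thm:chow_decomp_poly}. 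The one point worth flagging is that the index $i = \mathrm{argmax}_i p_i$ is not required by correctness — any fixed $i$ per tuple would yield a valid Chow decomposition — and is chosen only because absorbing the coefficients into the longest factor does not inflate the term count, so the resulting Chow rank and b-rank match the bounds obtained in the unit-coefficient case.
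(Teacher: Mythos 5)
Your proposal is correct and follows essentially the same route as the paper, which only sketches this corollary as ``expand each term of the Theorem~\ref{thm:chow_decomp_poly} decomposition so that exactly one factor stays non-monomial and attach the coefficients $a_S$ there''; your monomial-by-monomial verification via the bijection between subsets $S$ and tuples $(p_j)$ is just a careful elaboration of that same idea. Your closing remark is also accurate: any choice of the distinguished index $i$ gives a valid Chow decomposition, and $\mathrm{argmax}_i p_i$ is selected only to keep the number of expanded terms (hence the b-rank bound) as small as possible.
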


For example,
\begin{widetext}
\begin{align}
\begin{split}
r_3((2,1,3))=&x_1x_3(a_{134}x_4+a_{135}x_5+a_{136}x_6)+x_2x_3(a_{234}x_4+a_{235}x_5+a_{236}x_6)
\end{split}
\end{align}
and
\begin{align}
\begin{split}
V^{a}_{6,3}=&x_1x_2(a_{123}x_3+a_{124}x_4+a_{125}x_5+a_{126}x_6)+x_1x_3(a_{134}x_4+a_{135}x_5+a_{136}x_6)\\&+x_2x_3(a_{234}x_4+a_{235}x_5+a_{236}x_6)+(a_{145}x_1+a_{245}x_2+a_{345}x_3)x_4x_5\\&+(a_{146}x_1+a_{246}x_2+a_{346}x_3)x_4x_6+(a_{156}x_1+a_{256}x_2+a_{356}x_3+a_{456}x_4)x_5x_6.
\end{split}
\end{align}
\end{widetext}
This may then also be compared with Eq.~(\ref{eq:v63}).


\end{document}